\newcommand{\openquote}{``}
\newcommand{\ie}{i.e.,~}
\newcommand{\etal}{\textit{et al.}~}
\newcommand{\eqdef}{\stackrel{\mathsf{def}}{=}}
\newcommand{\set}[1]{\left\{\; {#1} \;\right\}}
\newcommand{\setof}[2]{\left\{{#1} \;:\; {#2} \right\}}
\newcommand{\Nat}{\mathbb{N}}
\newcommand{\Rel}{\mathbb{Z}}
\newcommand{\Reel}{\mathbb{R}}
\newcommand{\DefineNewOrder}[4]{%
	\expandafter\newcommand\csname#1le\endcsname{\mathrel{#2}_{#4}}%
	\expandafter\newcommand\csname#1ge\endcsname{\mathrel{\reflectbox{\text{$#2$}}}_{#4}}%
	\expandafter\newcommand\csname#1leq\endcsname{\mathrel{#3}_{#4}}%
	\expandafter\newcommand\csname#1geq\endcsname{\mathrel{\text{\reflectbox{$#3$}}}_{#4}}%
	\expandafter\newcommand\csname#1eq\endcsname{\mathrel{\equiv}_{#4}}%
	\expandafter\newcommand\csname#1perp\endcsname{\mathrel{\perp}_{#4}}%
}
\newcommand{\Min}{{Min}\;}
\newcommand{\Value}{\mathsf{Val}}
\newcommand{\reset}[1]{{#1}:=0}
\newcommand{\play}{\textit{play}}
\DeclareMathOperator{\Reg}{Reg}
\newcommand{\Maxsf}{\mathsf{Max}}
\newcommand{\Minsf}{\mathsf{Min}}
\newcommand{\Mamax}{\mathsf{Max}}
\newcommand{\Mimin}{\mathsf{Min}}
\newcommand{\Player}{\mathsf{P}}
\newcommand{\upclock}[1]{X_{#1}^\uparrow}
\newcommand{\weight}{\mathsf{weight}}
\newcommand{\opt}[2]{\mathsf{opt}^{#1}_{#2}}
\newcommand{\Opt}[2]{\mathsf{Opt}^{#1}_{#2}}
\newcommand{\Next}{\mathsf{Leaving}}
\newcommand{\reg}{\mathsf{reg}}
\newcommand{\regadh}[1]{\overline{\mathsf{reg}(#1)}}
\newcommand{\wout}{w_{\mathsf{out}}}
\newcommand{\wtg}{\textsc{wtg}\xspace}
\newcommand{\wtgs}{\textsc{wtg}s\xspace}
\newcommand{\Rbb}{\ensuremath{\mathbb{R}}}
\newcommand{\Ccal}{\ensuremath{\mathcal{C}}}
\newcommand{\Gcal}{\ensuremath{\mathcal{G}}}
\newcommand{\Kcal}{\ensuremath{\mathcal{K}}}
\newcommand{\Rcal}{\ensuremath{\mathcal{R}}}
\newcommand{\Tcal}{\ensuremath{\mathcal{T}}}
\newcommand{\Xcal}{\ensuremath{\mathcal{X}}}
\newcommand{\ignore}[1]{}
\newcommand{\tiff}{\textit{iff} }
\newcommand{\intxy}[4]{\ensuremath{\mathchoice
    {\left#3\protect\TestGauche{#1},  \protect\TestDroite{#2}\right#4}
    {\left#3\protect\TestGauche{#1},  \protect\TestDroite{#2}\right#4}
    {\left#3\protect\TestGauche{#1},\protect\TestDroite{#2}\right#4}
    {\left#3\protect\TestGauche{#1},\protect\TestDroite{#2}\right#4}
}}  
	\newcommand{\intff}[2]{\intxy{#1}{#2}{[}{]}}
	\newcommand{\intof}[2]{\intxy{#1}{#2}{(}{]}}
	\newcommand{\intfo}[2]{\intxy{#1}{#2}{[}{)}}
\newcommand{\TestGauche}[1]{\ifthenelse{\equal{#1}{}}{\minf}{#1}}
\newcommand{\TestDroite}[1]{\ifthenelse{\equal{#1}{}}{\pinf}{#1}}
\newcommand{\enstq}[2]{\left\{ \left.#1\ \vphantom{#2}\right|\ #2  \right\}}
\newcommand{\pa}[2][9]{%
	\ifthenelse{#1 = 0}{#2}{}%
	\ifthenelse{#1 = 1}{(#2)}{}%
	\ifthenelse{#1 = 2}{\big(#2\big)}{}%
	\ifthenelse{#1 = 3}{\Big(#2\Big)}{}%
	\ifthenelse{#1 = 4}{\bigg(#2\bigg)}{}%
	\ifthenelse{#1 = 5}{\Bigg(#2\Bigg)}{}%
	\ifthenelse{#1 = 9}{\left(#2\right)}{}%
}
\title{Deciding the Value of Two-Clock Almost Non-Zeno Weighted Timed Games}
\author{Isa {Vialard}}{Max Planck Institute for Software Systems, Saarland Informatics Campus, Germany \and \url{https://isavialard.github.io/home/} }{vialard@mpi-sws.org}{https://orcid.org/0000-0002-7261-9342}{}
\authorrunning{I. Vialard} 
\keywords{Weighted timed games, decidability, real-time systems} 
\begin{document}

\maketitle

\begin{abstract}
 The Value Problem for weighted timed games (\wtgs) consists in determining, given a two-player weighted timed game with a reachability objective and a rational threshold, whether or not the value of the game exceeds the threshold.
When restrained to \wtgs with non-negative weight, this problem is known to be undecidable for weighted timed games with three or more clocks, and decidable for $1$-clock \wtgs.
The Value Problem for two-clock non-negative \wtgs, which remained stubbornly open for a decade, was recently shown to be undecidable. In this article, we show that the Value Problem is decidable when considering $2$-clock almost non-Zeno \wtgs.
\end{abstract}
%
%

\section{Introduction}

Introduced by Alur and Dill (\cite{Alur94}) in the early 1990s,
a \emph{timed automaton} is an automaton where transitions are limited by time constraints on a set of finite clocks. 
\emph{Weighted timed automata}, also known as priced timed automata, are timed automata with integer costs added to locations and transitions. These costs can be punctual, or linear in terms of time spent in a location.
Timed automata and weighted timed automata are powerful models for real-time systems—for instance task scheduling, controller synthesis, energy-aware systems, etc.

Real-time systems often have to deal with perturbations from an uncontrollable environment (for instance, a user). This can be modelled by \emph{timed games}: timed automata where transitions are divided among two players, $\Minsf$ (the control), who has a reachability objective, and her opponent $\Maxsf$ (the environment).

When adding costs to timed games, we obtain \emph{weighted timed games} (\wtgs): $\Minsf$ now attempts to reach a goal location while minimizing the cost of doing so.

A natural problem on \wtgs is the following: 
Given a \wtg and a threshold $c$, is the infimum of the optimal cost
\footnote{where the optimal cost is the supremum on all possible strategies of $\Maxsf$ of the weight of the path produced by the strategy profile}
 on all strategies of $\Minsf$ less than or equal to $c$?
 This is the \emph{Value Problem}, not to be confused with the \emph{Existence Problem}:
 Given a \wtg and a threshold $c$, does $\Minsf$ have a strategy to reach her goal location with cost at most $c$?

%

In this article, we focus on the Value Problem. While decidable for weighted timed automata, the Value Problem is undecidable for weighted timed games (\cite{Bouyer2015}).

However, one can recover decidability by restricting the number of clocks. Bouyer \etal (\cite{bouyer2006}) establish that the Value problem is decidable for one-clock \wtg with non-negative weights. This decidability result is extended to one-clock \wtg with arbitrary weights in \cite{monmege2024}.
On the other hand, Bouyer \etal (\cite{Bouyer2015}) prove undecidability for \wtgs with non-negative weight and $3$-clocks or more. 
Brihaye \etal (\cite{brihaye2014}) show undecidability of two-clock \wtg with arbitrary weights. Only recently, Guilmant \etal (\cite{two-clock-undec}) proved undecidability of two-clock, time-bounded \wtg with non-negative weights.

Another way to recover decidability is with non-Zeno (or divergence) properties. A \wtg with non-negative weights has a \emph{strictly non-Zeno cost property} when every cycle is of cost at least $1$. Intuitively, this property forbids any \openquote Zeno paradox'' behaviour. Strictly non-Zeno \wtgs can be enfolded into acyclic \wtgs; hence, the Value Problem is decidable (\cite{BCFL-fsttcs04}). This result was generalized to \wtgs with arbitrary weight in \cite{Busatto2017}, for which non-Zenoness becomes \emph{divergence}.
\footnote{A \wtg is \emph{divergent} if every strongly connected component has either cycles of weight in $\intof{-\infty}{-1}$ or 
cycles of weight in $\intfo{1}{\infty}$}

Divergence properties can be weakened into almost-divergence properties. A \wtg with non-negative weight is said to be \emph{almost non-Zeno} (or almost strictly non-Zeno, or almost strongly non-Zeno) if its cycles are of weight $0$, or at least $1$. Bouyer \etal (\cite{Bouyer2015}) establish that the optimal value of such \wtgs is approximable (but still undecidable). Busatto-Gaston \etal (\cite{Busatto2018}) extend this result to \emph{almost divergent}\footnote{A \wtg is \emph{almost divergent} if every strongly connected component has either cycles of weight in $\intof{-\infty}{-1}\cup\{0\}$, or 
cycles of weight in $\{0\}\cup\intfo{1}{\infty}$}
 \wtgs with arbitrary weights.

\begin{figure}
		\begin{tabular}{rcccc}
		\toprule
		Number of
		&\multirow{2}{*}{Weights in}
		& \multirow{2}{*}{WTG} 
		& {Almost divergent}
		& Divergent \\
		clocks & & & WTG & WTG\\
		\midrule
		\multirow{2}{*}{$1$}&$\Nat$ 
			& Decidable \cite{bouyer2006} 
			& Decidable 
			& Decidable\\
		&$\Rel$
			& Decidable \cite{monmege2024}
			& Decidable 
			& Decidable\\ 
		\midrule
		\multirow{3}{*}{$2$}& $\Nat$
			& \multirow{2}{*}{Undec.  \cite{two-clock-undec}}
			& \textcolor{blue}{\textbf{Decidable} }
			& \multirow{2}{*}{Decidable}\\
		&&& \textcolor{blue}{\textit{(Our contribution)}}&
		 \\
		&$\Rel$
			&Undec. 	
			& \emph{Decidability open} 	
			& Decidable\\
		\midrule
		\multirow{4}{*}{$\geq 3$}&\multirow{2}{*}{$\Nat$}
			&\multirow{2}{*}{Undec.}	
			&Undec.	\cite{Bouyer2015}
			&\multirow{2}{*}{Decidable \cite{BCFL-fsttcs04}}\\
	&&&\textit{(Approximable)}\cite{Bouyer2015}&\\
	&\multirow{2}{*}{$\Rel$}
		&Undec.	
		&Undec. 
		&\multirow{2}{*}{Decidable \cite{Busatto2017}}\\
	&& \textit{(Non approx.)}\cite{guilmant2024}& \textit{(Approximable)} \cite{Busatto2018}&\\ 
		\end{tabular}

\begin{tikzpicture}[overlay]
\def\x{4}
\def\xx{6.7}
\def\xxx{10.1}
\def\y{4.5}
\def\yy{3.3}
\def\yyy{2.3}
\def\yyyy{0}
\def\z{7}
\def\zz{9.8}
\draw[red,ultra thick,rounded corners] 
(\x,\y) 
-- (\x,\yy)
-- (\xx,\yy)
-- (\xx,\yyy)
-- (\z,\yyy);
\draw[red,ultra thick,rounded corners] 
(\zz,\yyy)
-- (\xxx,\yyy)
-- (\xxx,\yyyy);
\draw[red,ultra thick,dotted] (\xxx,\yyyy)-- (\xxx,\yyyy-.5);

\end{tikzpicture}
\caption{Landscape of WTG decidability and approximability.}
\label{fig-landscape}
\end{figure}

\subsection*{Contributions}

The main theorem of this article is the following: 

\begin{restatable}{theorem}{thmDec}\label{thm-dec}
Given a two-player, turn-based, two-clocks, almost non-Zeno weighted timed game with non-negative integer weights, the
Value Problem is decidable.
\end{restatable}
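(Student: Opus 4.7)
The plan is to combine a strongly-connected-component (SCC) decomposition of the game with a dedicated analysis of the weight-zero sub-games, using the two-clock restriction to obtain exact (rather than merely approximate) value computations. At a high level, the almost non-Zeno property with non-negative weights forces a strict dichotomy between ``heavy'' and ``free'' parts of the game, and the two-clock restriction is what makes the ``free'' parts tractable.

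First, I would decompose the underlying transition graph into SCCs and distinguish between \emph{positive-SCCs}, which contain at least one cycle of weight~$\ge 1$, and \emph{zero-SCCs}, whose cycles all have weight~$0$. Because weights are non-negative, a zero-SCC actually has all internal edge weights and all location rates equal to $0$ (otherwise the SCC would contain a cycle of positive weight). Consequently, plays of cost at most the threshold $c$ can cross positive-SCCs only $O(c)$ times, which yields a finite bounded ``unfolding'' of the game above the zero-SCCs. The remaining difficulty is therefore concentrated in the zero-SCCs.

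Second, I would treat each zero-SCC as a pure two-clock timed reachability game whose targets are parametrised by an exit transition and a clock region on that exit. The goal is to compute, for each such exit specification, the set of entering clock valuations from which $\Minsf$ can force reaching the exit together with the cost that will be inherited from the successor SCC. Adapting the kernel-and-corner-point machinery used for divergent \wtgs (\cite{BCFL-fsttcs04,Busatto2017}), but restricted to dimension~two, I expect these winning sets and their associated value functions to be representable as finite unions of convex two-dimensional polyhedra bounded by affine constraints over the two clocks, in analogy with the piecewise-affine representations exploited in the one-clock case (\cite{bouyer2006,monmege2024}). Finally, I would propagate values backwards along the condensation DAG: positive-SCCs add at least one unit of weight per traversal, so only $\lceil c\rceil + 1$ unfoldings are needed; zero-SCCs compose piecewise-affine value functions via the two-clock kernel analysis; terminal SCCs give the base case.

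The main obstacle—and the point where the two-clock hypothesis is essential—is to prove that the value function of a zero-SCC admits a finite, computable, \emph{exact} representation that composes correctly through nested SCCs. In three or more clocks this is precisely what fails, giving undecidability despite approximability (\cite{Bouyer2015,Busatto2018}); in one clock the value functions are piecewise affine in a single variable and easily handled. With two clocks the geometry is genuinely two-dimensional: diagonal constraints between the clocks appear, and one must show that the backward fixpoint over the two-dimensional region graph stabilizes in finitely many iterations and that the resulting polyhedral value functions can be exactly intersected, restricted, and shifted by the integer weights contributed by positive-SCCs. Controlling this polyhedral calculus, and bounding the complexity of the representations so that the overall procedure terminates, is the technical heart of the argument.
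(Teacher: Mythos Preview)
Your high-level architecture is essentially the paper's: isolate the weight-zero part (the paper calls it the \emph{kernel}), observe that positive-weight cycles can be traversed only boundedly many times so that a finite semi-unfolding suffices outside the kernel, and reduce everything to computing exact value functions inside the kernel. One minor correction: your claim that a zero-SCC automatically has all rates and discrete weights equal to zero is not literally true---a location on a zero-weight cycle may carry a positive rate provided every such cycle forces zero delay there; the paper handles this with an explicit splitting construction (\cref{fig-weight-zero}).

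The genuine gap is in your treatment of the kernel. You correctly locate the difficulty---showing that the backward fixpoint on a zero-SCC stabilises and that the resulting value function has a finite exact representation---but you offer no mechanism for \emph{why} this should happen with two clocks when it provably fails with three. Appealing to ``piecewise-affine two-dimensional polyhedra'' and ``controlling the polyhedral calculus'' is precisely what does \emph{not} suffice: the paper exhibits a three-clock kernel (\cref{fig-3clock}) on which value iteration never terminates, and yet at every finite stage of that iteration the value functions are still perfectly nice piecewise-affine objects. The issue is not representability at each step but bounding the number of distinct functions that can ever arise along the iteration.

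The idea you are missing is a \emph{dimension reduction}. After passing to the region game and relaxing strict guards, the paper shows (\cref{thm-all-reset}) that the kernel can be transformed so that every non-target transition resets at least one clock. Consequently every non-initial configuration reached in the kernel has one clock equal to~$0$, and the value-iteration functions $\Opt{\ell}{k}$ live on a \emph{one}-dimensional domain (parametrised by the ``circular clock difference''~$\Delta$). On that domain the paper argues that every linear piece of any iterate is either a piece of one of the finitely many projected output functions, or a constant equal to a local extremum of (mins/maxes of) such functions; only finitely many continuous piecewise-linear functions can be assembled from this finite stock of pieces, so value iteration terminates. Without this reduction from two dimensions to one, your second step is a restatement of the obstacle rather than a way past it.
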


The proof of \cref{thm-dec} relies on the partial unfolding used in the approximability proof of \cite{Bouyer2015}, and on several techniques to turn a \wtg into an equivalent, simpler game with desirable properties, such as the relaxing of guards, or adding clock resets to every transition.
\section{Definitions}

Let $\mathcal{X}$ be a finite set of \textbf{clocks}. \textbf{Clock constraints} (or \textbf{guards}) over $\mathcal{X}$ are
	expressions of the form $x \mathrel{\sim} n$, where $x,y\in\mathcal{X}$ are clocks,
        ${\sim}\in\{<,\leq,=,\geq,>\}$ is a comparison symbol, and
        $n\in\mathbb{N}$ is a natural number.  We write $\mathcal{C}$ to denote
        the set of all clock constraints over $\mathcal{X}$.
A \textbf{valuation} on $\mathcal{X}$ is  a function $\nu:\mathcal{X}\to\mathbb{R}_{\geq 0}$. 
	For $d\in\mathbb{R}_{\geq 0}$ we denote by $\nu+d$ the valuation such
        that, for all clocks $x \in \mathcal{X}$, $(\nu+d)(x)=\nu(x)+d$. Let
        $X \subseteq \mathcal{X}$ be a subset of all clocks. We write
        $\nu[X :=0]$ for the valuation such that, for all clocks $x
        \in X$, $\nu[X :=0](x) = 0$, and $\nu[X:=0](y) = \nu(y)$ for
        all other clocks $y \notin X$.
	For $C \subseteq \mathcal{C}$ a set of clock constraints over $\mathcal{X}$, we say that
        the valuation $\nu$ \textbf{satisfies} $C$, denoted $\nu \models
        C$, if and only if all the comparisons in $C$ hold
        when replacing each clock $x$ by its corresponding value $\nu(x)$.

\begin{definition}
	A \textbf{(turn-based) weighted timed game} is given by a
        tuple $\mathcal{G}=$\linebreak $(L_\mathsf{Min}, L_\mathsf{Max},G,\mathcal{X},T,w)$, where:
	\begin{itemize}
        \item $L_\mathsf{Min}$ and $L_\mathsf{Max}$ are the (disjoint)
          sets of \textbf{locations} belonging to Players $\mathsf{Min}$ and
          $\mathsf{Max}$ respectively; we let $L = L_\mathsf{Min} \cup
          L_\mathsf{Max}$ denote the set of all locations. (In drawings,
          locations belonging to $\mathsf{Min}$ are depicted by blue
          circles, and those belonging to $\mathsf{Max}$ are depicted
          by red squares.)
          \item $G\subseteq L_{\mathsf{Min}}$ are the \textbf{goal locations}.
		\item $\mathcal{X}$ is a set of clocks.
		\item $T\subseteq (L \setminus G) \times2^\mathcal{C}\times 2^\mathcal{X}\times
                  L$ is a set of \textbf{(discrete) transitions}. A
                  transition $\ell \xrightarrow{C,X} \ell'$
                  enables moving from location $\ell$ to
                  location $\ell'$, provided all clock constraints in
                  $C$ are satisfied, and afterwards resetting all clocks in $X$
                  to zero. 
		\item $w:(L \setminus G) \cup T\to \mathbb{Z}$ is a \textbf{weight function}.
                \end{itemize}
In the above, we assume that all data (set of locations, set of
clocks, set of transitions, set of clock constraints) are finite.             
\end{definition}

          Let $\mathcal{G}= (L_\mathsf{Min}, L_\mathsf{Max},G,\mathcal{X},T,w)$ be
          a \wtg.  A \textbf{configuration} over
          $\mathcal{G}$ is a pair $(\ell,\nu)$, where $\ell\in L$ and $\nu$
          is a valuation on $\mathcal{X}$. Let $d \in \mathbb{R}_{\geq 0}$ be a
          \textbf{delay} and $t = \ell \xrightarrow{C,X} \ell' \in T$
          be a discrete transition. One then has a \textbf{delayed
            transition} (or simply a \textbf{transition} if the
          context is clear)
          $(\ell,\nu) \xrightarrow{d,t} (\ell',\nu')$ provided that
          $\nu+d \models C$ and $\nu' = (\nu+d)[X := 0]$. Intuitively,
          control remains in location $\ell$ for $d$ time units, after
          which it transitions to location $\ell'$,
          resetting all the clocks in $X$ to zero in the process.  The
          \textbf{weight} of such a delayed transition is
          $d \cdot w(\ell) + w(t)$, taking account both of the time
          spent in $\ell$ as well as the weight of the discrete
          transition $t$.

          As noted in~\cite{BusattoGastonMR23},
          without loss of generality one can assume that no
          configuration (other than those associated with goal
          locations) is deadlocked; in other words, for any location
          $\ell \in L\setminus G$ and valuation
          $\nu \in \mathbb{R}_{\geq 0}^{\mathcal{X}}$, there exists
          $d \in \mathbb{R}_{\geq 0}$ and $t \in T$ such that
          $(\ell,\nu) \xrightarrow{d,t} (\ell',\nu')$.\footnote{This
            can be achieved by adding unguarded transitions to a
            sink location for all locations controlled by $\mathsf{Min}$ and
            unguarded transitions to a goal location for the ones
            controlled by $\mathsf{Max}$.}

          Let $k\in\mathbb{N}$.  A \textbf{run} $\rho$ of length $k$ over $\mathcal{G}$ from a given configuration
          $(\ell_0,\nu_0)$ is a sequence of matching delayed
          transitions, as follows:
          \[ \rho = (\ell_0,\nu_0) \xrightarrow{d_0,t_0}
            (\ell_1,\nu_1) \xrightarrow{d_1,t_1} \cdots
            \xrightarrow{d_{k-1},t_{k-1}}  (\ell_k,\nu_k) \, . \]
          The \textbf{weight} of $\rho$ is the
          cumulative weight of the underlying delayed transitions:
          \[ \mathsf{weight}(\rho) = \sum_{i=0}^{k-1} (d_i\cdot
            w(\ell_i) + w(t_i)) \, . \]
          An infinite run $\rho$ is defined in the obvious way;
          however, since no goal location is ever reached, its weight is defined to be infinite:
          $\mathsf{weight}(\rho)=+\infty$.

          A run is \textbf{maximal} if it is either infinite or cannot be extended
          further. Thanks to our deadlock-freedom assumption, finite maximal
          runs must end in a goal location. We refer to maximal runs
          as \textbf{plays}.

          We now define the notion of \textbf{strategy}. Recall
          that locations of $\mathcal{G}$ are partitioned into sets $L_{\mathsf{Min}}$
          and $L_{\mathsf{Max}}$, belonging respectively to Players
          $\mathsf{Min}$ and $\mathsf{Max}$.
          Let Player $\mathsf{P} \in
          \{\mathsf{Min},\mathsf{Max}\}$, and 
write $\mathcal{FR}_{\mathcal{G}}^{\mathsf{P}}$
          to denote the collection of all non-maximal finite runs of $\mathcal{G}$ 
          ending in a location belonging to
          Player $\mathsf{P}$. A \textbf{strategy}
         for Player $\mathsf{P}$ is a mapping
           $\sigma_{\mathsf{P}} : \mathcal{FR}_{\mathcal{G}}^{\mathsf{P}}
           \to \mathbb{R}_{\geq 0} \times T$ such that for all 
           finite runs $\rho \in \mathcal{FR}_{\mathcal{G}}^{\mathsf{P}}$
           ending in configuration $(\ell,\nu)$ with
           $\ell \in L_{\mathsf{P}}$, the delayed transition
           $(\ell,\nu) \xrightarrow{d,t} (\ell',\nu')$ is valid,
           where $\sigma_{\mathsf{P}}(\rho)=(d,t)$ and $(\ell',\nu')$
           is some configuration (uniquely determined by
           $\sigma_{\mathsf{P}}(\rho)$ and $\nu$).

Let us fix a starting configuration $(\ell_0,\nu_0)$, and let
$\sigma_{\mathsf{Min}}$ and $\sigma_{\mathsf{Max}}$ be strategies for
Players $\mathsf{Min}$ and $\mathsf{Max}$ respectively (one speaks of
a \emph{strategy profile}). Let us denote by
$\mathsf{play}_\mathcal{G}((\ell_0,\nu_0),\sigma_{\mathsf{Min}},\sigma_{\mathsf{Max}})$
the unique maximal run starting from configuration
$(\ell_0,\nu_0)$ and unfolding according to the strategy profile
$(\sigma_{\mathsf{Min}},\sigma_{\mathsf{Max}})$: in other words,
for every strict finite prefix $\rho$ of
$\mathsf{play}_\mathcal{G}((\ell_0,\nu_0),\sigma_{\mathsf{Min}},\sigma_{\mathsf{Max}})$
in $\mathcal{FR}_{\mathcal{G}}^{\mathsf{P}}$, the delayed transition
immediately following $\rho$ in
$\mathsf{play}_\mathcal{G}((\ell_0,\nu_0),\sigma_{\mathsf{Min}},\sigma_{\mathsf{Max}})$
is labelled with $\sigma_{\mathsf{P}}(\rho)$.

Recall that the objective of Player $\mathsf{Min}$ is to reach a goal
location through a play whose weight is as small possible. Player
$\mathsf{Max}$ has an opposite objective, trying to avoid goal
locations, and, if not possible, to maximise the cumulative weight of
any attendant play. This gives rise to the following two symmetrical definitions:
\begin{align*}
\overline{\mathsf{Val}}_\mathcal{G}(\ell_0,\nu_0) &=
  \inf_{\sigma_{\mathsf{Min}}} \left\{ \sup_{\sigma_\mathsf{Max}}
                                              \left\{
                                              \mathsf{weight}(\mathsf{play}_\mathcal{G}((\ell_0,\nu_0),\sigma_{\mathsf{Min}},\sigma_{\mathsf{Max}}))\right\}
  \right\} \mbox{\ and}\\
\underline{\mathsf{Val}}_\mathcal{G}(\ell_0,\nu_0) &=
  \sup_{\sigma_{\mathsf{Max}}} \left\{\inf_{\sigma_\mathsf{Min}}
                                               \left\{\mathsf{weight}(\mathsf{play}_\mathcal{G}((\ell_0,\nu_0),\sigma_{\mathsf{Min}},\sigma_{\mathsf{Max}}))
                                               \right\} \right\} \, .
\end{align*}

$\overline{\mathsf{Val}}_\mathcal{G}(\ell_0,\nu_0)$ represents the smallest 
possible weight that Player $\mathsf{Min}$ can possibly achieve,
starting from configuration $(\ell_0,\nu_0)$,
against best play from Player $\mathsf{Max}$, and conversely for
$\overline{\mathsf{Val}}_\mathcal{G}(\ell_0,\nu_0)$: the latter represents
the largest possible weight that Player $\mathsf{Max}$ can enforce,
against best play from Player $\mathsf{Min}$.\footnote{Technically
  speaking, these values may not be literally achievable; however
  given any $\varepsilon > 0$, both players are guaranteed to have
  strategies that can take them to within $\varepsilon$ of the optimal value.}
As noted in~\cite{BusattoGastonMR23}, turned-based
\wtgs are \emph{determined}, and therefore
$\overline{\mathsf{Val}}_\mathcal{G}(\ell_0,\nu_0) =
\underline{\mathsf{Val}}_\mathcal{G}(\ell_0,\nu_0)$ for any starting
configuration $(\ell_0,\nu_0)$; we denote this common value by
$\mathsf{Val}_\mathcal{G}(\ell_0,\nu_0)$.

\begin{remark}
	Note that $\Value_\mathcal{G}(\ell_0,\nu_0)$ can take on real numbers, or either
	of the values $-\infty$ and $+\infty$.
	However, since reachability is decidable in timed games, it is decidable whether $\Value_\mathcal{G}(\ell_0,\nu_0)=+\infty$ or not.
\end{remark}

\textbf{In the rest of this article, every weighted timed game is turn-based, with non-negative weights, of value in $\mathbb{R}$}.

\section{Unfolding Almost Non-Zeno Weighted Timed Games}

In this section, we assume familiarity with the region construction (see \cite{Alur94}). We denote by $\Rcal(\Gcal)$ the region automaton associated with a \wtg $\Gcal$. In $\Rcal(\Gcal)$, every location $\ell$ is assigned a unique region $\reg(\ell)$ of accessible valuations.
For more details, see \cref{sec-region-trimmed}.

Bouyer \etal (\cite{Bouyer2015}) showed that, even though the Value Problem is undecidable for \wtg with non-negative weight and $3$ clocks or more, it is approximable in the subclass of almost non-Zeno \wtg.
In this section, we use the structure of their proof of approximability to prove decidability for almost non-Zeno \wtgs with $2$ clocks.

\begin{definition}[Almost non-Zeno \wtg]
A \wtg $\Gcal$ is \textbf{almost non-Zeno} if there exists $\kappa >0$ such that for any finite run $\rho$ in $\Gcal$ that follows a region cycle of $\Rcal(\Gcal)$, $\weight(\rho)=0$ or $\geq\kappa$.
\end{definition}
\begin{remark}
 It is decidable whether a weighted
timed game is almost non-Zeno or not (by enumerating all simple cycles in the
corner-point abstraction of $\Gcal$, see \cite{BBL2008}).
\end{remark}

In an acyclic \wtg, the value is decidable and can be computed  from the target locations up to the initial location, by computing for each node $\ell$ a function $W_\ell:\reg(\ell)\to\Rbb$ which assigns to a valuation $\nu\in \reg(\ell)$ the optimal weight 
$\Value_\Gcal(\ell,\nu)$.
By construction, every $W_\ell$ is a piecewise-linear function.

Intuitively, we will unfold cycles of weight $\geq 1$ to obtain a ``tree-like'' \wtg where only cycles of weight $0$ are left; we will deal with them separately.

\paragraph*{Semi-unfolding}
 For any trimmed region \wtg $\Gcal$, let $\tilde{\Gcal}$ be the semi-unfolded \wtg built from $\Rcal(\Gcal)$
  in \cite{Bouyer2015}:
 
   First color in green every location and edge that are part of a cycle of weight $0$. 
   Observe that you can modify any \wtg such that any green location has weight $0$\footnote{\cite{Bouyer2015} make a similar observation, but their construction implies adding a clock.}: in a trimmed region \wtg, if a location $\ell\in L_\Player$ of weight $p>0$ is part of a cycle of weight $0$, then there exists an outgoing transition from $\ell$ with a guard $x=0$ for some clock $x$. Therefore, as in \cref{fig-weight-zero}, one can add a location $\ell_0$ of weight $0$ in $L_\Player$ such that:
   \begin{itemize}
   \item every transition arriving in $\ell$ arrives in $\ell_0$ instead.
   \item every \emph{green} transition leaving $\ell$ leaves $\ell_0$ instead.
   \item there is a transition $\ell_0 \overset{x=0}{\rightarrow} \ell$. 
   \end{itemize}
   Thus let us assume that every green location has weight $0$.
\begin{figure}
\centering
\includegraphics[scale=.75]{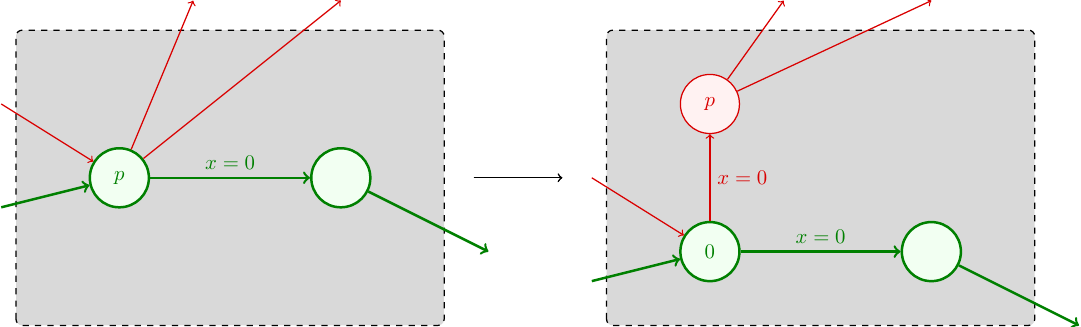}
\caption{How to ensure that every green location has weight $0$.
Thick green transitions and locations are part of cycles of weight $0$, locations labeled with weight $0$ or $p$ belong to the same player.}
\label{fig-weight-zero}
\end{figure} 
We define $\Kcal$ the \textbf{kernel} of $\Gcal$ as the restriction of $\Rcal(\Gcal)$ to fully-green strongly
connected components. Edges that leave $\Kcal$ are called the \textbf{output edges} of $\Kcal$.

Then we partially unfold $\Rcal(\Gcal)$ into a finite tree structure $\Tcal(\Gcal)$ :
starting from the initial location $i$ as a root, we follow every possible path in $\Gcal$, with a node for each time we visit a (non kernel) location, as to avoid creating cycles.
However
when along a branch we enter the kernel in some location $\ell$, we create a node $\Kcal_\ell$ instead of $\ell$, and for each output edge $t$ of $\Kcal$ accessible from $\ell$, with $t$ leading to a location $\ell'$, let $\ell'$ (or $\Kcal_\ell'$ if $\ell'\in\Kcal$) be a child of $\Kcal_\ell$, and continue to unfold from there.

We stop unfolding when, along any branch, a location or edge with positive weight of $\Rcal(\Gcal)$ is visited at
least $W/\kappa + 2$ times, where $W$ is an upper bound on the value of $\Gcal$.\footnote{obtained by using the corner-point abstraction, or considering a memoryless region-uniform strategy for $\Minsf$.}

To obtain $\tilde{\Gcal}$ from $\Tcal(\Gcal)$, replace each node $\Kcal_\ell$ by a copy of the strongly connected component of $\Kcal$ that contains $\ell$
(see \cite{Bouyer2015} for the formal construction). Then $\Value_{\tilde{\Gcal}}(i,\nu)=\Value_\Gcal(i,\nu)$ for any $\nu\in\reg(i)$.


In the partially unfolded games $\tilde{\Gcal}$ with three clocks or more, the cause of undecidability is inside the kernel nodes.
Hence, in \cite{Bouyer2015}, the approximation happens in the kernels. However, with only two clocks, the value is decidable in kernel weighted timed games:

\begin{definition}[Kernel weighted timed games]
A \textbf{kernel weighted timed game} $\Gcal$ is a $[0,1]$-\wtg 
$(L_\Minsf, L_\Maxsf,G,\mathcal{X},T,w,\wout)$ where every location or transition is of weight $0$, and each target location $\ell\in G$ has an output weight function $\wout(\ell,\cdot):\reg(\ell)\to \Reel$ which is continuous and piecewise linear.
In later notations, we omit $w$.
\end{definition}

\begin{restatable}{theorem}{thmKernel}
\label{lem-linear-kernel}
For any two-clock kernel \wtg $\Gcal$, for  any location $i\in\Gcal$, $W_i$ is a continuous piecewise-linear function which can be computed through the value iteration algorithm.
\end{restatable}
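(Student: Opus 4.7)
The plan is to analyse the Bellman operator underlying value iteration and prove that it both preserves the continuous piecewise-linear (CPWL) property of value functions on each region and converges in finitely many iterations.

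First, I would define the value iteration sequence. Initialise $W_\ell^{(0)} = \wout(\ell, \cdot)$ on goal locations $\ell \in G$ and $W_\ell^{(0)} \equiv +\infty$ elsewhere. For each non-goal location $\ell$, set
$W_\ell^{(k+1)}(\nu) = \min_t \inf_d \, W_{\ell'}^{(k)}((\nu+d)[X := 0])$ if $\ell \in L_\Minsf$ (and with $\max$ and $\sup$ replacing $\min$ and $\inf$ if $\ell \in L_\Maxsf$), where $t = \ell \xrightarrow{C,X} \ell'$ ranges over outgoing transitions and $d \geq 0$ ranges over delays satisfying $\nu + d \models C$.

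Next, I would show that this operator preserves CPWL. The key computation is the predecessor step for a single transition $t$: for a CPWL function $f$ on $\reg(\ell')$, the map $(\nu, d) \mapsto (\nu + d)[X := 0]$ is affine and the guard $\nu + d \models C$ is polyhedral, so $f((\nu + d)[X := 0])$ is CPWL in $(\nu, d)$ on a polyhedral feasible set. Taking $\inf_d$ or $\sup_d$ projects out a single coordinate: the optimum is attained either at an endpoint of the feasible interval in $d$ (whose defining hyperplanes are affine in $\nu$), or, when a linear piece is constant in $d$, along the piece itself. Each case yields a CPWL function of $\nu$, and taking $\min$ or $\max$ over finitely many transitions preserves CPWL. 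The two-clock restriction is essential: regions are two-dimensional polyhedra, so the polyhedral bookkeeping stays tractable, whereas in higher dimension this collapses, consistently with the broader undecidability picture.

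The main obstacle is proving that value iteration stabilises after finitely many steps. Because the kernel has every internal weight equal to zero, the value at $(\ell, \nu)$ reduces to an $\inf$-$\sup$ over strategies of $\wout(\ell_f, \nu_f)$ at the exit configuration. My approach would be to argue the existence of a bound $K$, polynomial in the number of locations and in $|\Rcal(\Gcal)|$, such that $W^{(K)} = W^{(K+1)}$. Intuitively, optimal plays can be chosen so that each pair (location, region) is visited only a bounded number of times: since no weight accrues in the kernel, repeated visits can only be useful if they modify the exit valuation, and only finitely many distinct exit valuations are useful from a given region. Formalising this bound and ruling out pathological CPWL refinements across iterations is the most delicate step. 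Once it is in place, $W_i = W_i^{(K)}$ is CPWL and is explicitly computed by value iteration.
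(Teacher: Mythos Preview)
Your CPWL-preservation argument is fine and standard, but the termination argument is a genuine gap. You acknowledge this yourself (``the most delicate step''), and indeed your sketch does not work: the claim that ``only finitely many distinct exit valuations are useful from a given region'' is false in general, and nothing in your argument uses the two-clock hypothesis in an essential way. The paper gives a three-clock kernel example (Figure~\ref{fig-3clock}) where optimal play requires $\Minsf$ to loop arbitrarily many times, each loop moving the exit valuation strictly closer to a limit it never reaches; value iteration therefore does not stabilise. Your ``bounded visits to each (location, region)'' intuition would apply equally there and is simply wrong. Bounding the number of CPWL pieces across iterations is exactly the hard part, and you have not done it.

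The paper's route is quite different. It first applies a chain of value-preserving transformations (Section~\ref{sec-all-simpl}), the crucial one being Lemma~\ref{thm-all-reset}: every non-goal transition can be made to reset at least one clock. With two clocks this forces every non-initial, non-goal location to have $\overline{\reg(\ell)}$ lying on one of the axes, so the value functions are effectively \emph{one-dimensional}. After a change of variable to the ``circular clock difference'' $\Delta\in[0,1]$, the Bellman update for a single transition becomes $\Delta\mapsto \inf_{\Delta'\in[\Delta,1]} f(\Delta')$ or $\Delta\mapsto \inf_{\Delta'\in[0,\Delta]} f(\Delta')$ (or $\sup$). Such an operation replaces portions of $f$ by constants equal to local extrema of $f$; hence every linear piece of every iterate is either a piece of some projected output function or a constant at a local extremum of (the min/max of) such functions. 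This is a \emph{finite} pool of pieces, so only finitely many continuous CPWL functions can be assembled from them, and monotonicity of the iterates forces termination. The two-clock hypothesis enters precisely in the dimension reduction; without it the argument collapses, consistently with the three-clock counterexample.
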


This is the main technical result of this articl, which we will prove in \cref{sec-value-kernel}. Let us show first how \cref{lem-linear-kernel} entails value decidability of the partial enfolding $\tilde{\Gcal}$:

\begin{lemma}
For every node $n$ in the tree $\Tcal(\Gcal)$, one can compute $W_n$ a continuous piecewise linear function such that $W_n: \nu\mapsto \Value_{\tilde{\Gcal}}(\ell,\nu)$, where $\ell$ is either $n$ if $n\not\in\Kcal$, or the entrance location of $n=\Kcal_\ell$.
\end{lemma}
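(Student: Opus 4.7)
The plan is to proceed by a bottom-up induction on the finite tree $\Tcal(\Gcal)$. The leaves are either goal locations $\ell\in G$, where $W_\ell\equiv 0$ is trivially continuous piecewise linear, or cut-off leaves produced when the unfolding bound $W/\kappa+2$ is reached; by the construction of $\tilde{\Gcal}$ in \cite{Bouyer2015} these are assigned a constant value (by the choice of the bound, going beyond it is never optimal for the relevant player), which is again continuous piecewise linear.

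\textbf{Inductive step, non-kernel node.} Let $n$ be a tree node corresponding to a location $\ell\notin\Kcal$, with outgoing transitions $t_1,\ldots,t_k$ in $\Gcal$ going to children $m_1,\ldots,m_k$ of $n$ in $\Tcal(\Gcal)$. By induction, each $W_{m_i}$ is continuous piecewise linear on $\reg(m_i)$. For each transition $t_i:\ell\xrightarrow{C_i,X_i}\ell_i'$, define
\[
f_i(\nu)\;=\;\inf_{\substack{d\geq 0\\\nu+d\models C_i}}\Bigl(d\cdot w(\ell)+w(t_i)+W_{m_i}\bigl((\nu+d)[X_i:=0]\bigr)\Bigr),
\]
and set $W_n(\nu)=\min_i f_i(\nu)$ if $\ell\in L_\Minsf$ and $W_n(\nu)=\max_i f_i(\nu)$ if $\ell\in L_\Maxsf$. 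With only two clocks, the reset turns $W_{m_i}$ into a continuous piecewise linear function on the two-dimensional clock space; composing with $(\nu,d)\mapsto\nu+d$ makes the integrand piecewise linear in three real variables; and optimising over $d$ in the union of intervals induced by $C_i\cap\reg(\ell)$ preserves continuous piecewise linearity, since the optimum on each linear piece is attained either at an endpoint of a feasible delay interval or along a linear face. Taking a finite $\min$ or $\max$ then preserves the class.

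\textbf{Inductive step, kernel node.} If $n=\Kcal_\ell$, regard the attached strongly connected kernel component as a kernel \wtg whose output locations are identified with the output edges of $\Kcal$ leaving $n$ in $\Tcal(\Gcal)$. By induction, each child $m$ reached via the output edge into some location $\ell'$ has a continuous piecewise linear $W_m$; we set $\wout(\ell',\cdot)=W_m$. By \cref{lem-linear-kernel}, the value function of this kernel \wtg is computable and continuous piecewise linear, hence $W_n$ is as required, and equals $\nu\mapsto\Value_{\tilde{\Gcal}}(\ell,\nu)$ by the semantics of the semi-unfolding.

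The main obstacle is the non-kernel case: one must verify that the one-step backward operator preserves continuous piecewise linearity even though the delay $d$ moves both clocks simultaneously while only one of them may be reset, so that $(\nu+d)[X_i:=0]$ traverses several linear pieces of $W_{m_i}$ as $d$ varies inside $\reg(\ell)$. This is handled in the standard way (as in \cite{bouyer2006,monmege2024}): partition $\reg(\ell)\times\Rbb_{\geq 0}$ into finitely many cells on which the expression inside the $\inf$ is affine in $(\nu,d)$, solve the one-dimensional optimisation in $d$ on each cell by comparing values at the cell boundaries, and check continuity across boundaries using the continuity of the $W_{m_i}$. The resulting $f_i$, and therefore $W_n$, has finitely many linear pieces and is continuous, which closes the induction.
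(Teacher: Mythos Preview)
Your proof is correct and follows essentially the same bottom-up induction on $\Tcal(\Gcal)$ as the paper: leaves are constant, non-kernel nodes are handled by a one-step Bellman operator (you are more explicit than the paper about why this preserves continuous piecewise linearity, and you also include the weight terms $d\cdot w(\ell)+w(t_i)$ that the paper's displayed formula omits), and kernel nodes are reduced to \cref{lem-linear-kernel} by equipping the kernel SCC with output weights coming from the children. The only minor slip is that your kernel output weight should be $\wout(\ell_t,\nu)=W_m(\nu)+w(t)$ to account for the weight of the output edge $t$, as the paper does; this does not affect the piecewise-linear conclusion.
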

\begin{proof}
In the tree structure of $\tilde{\Gcal}$, consider a node $n$:
if $n$ is a leaf, then $n \in G$. Thus let $W_n$ be the constant null function.
Now consider that $n$ is not a leaf, and by induction hypothesis assume that for every child $n'$ of $n$, $W_n'$ is continuous and piecewise linear.
If $n=\ell \not\in\Kcal$ then 
\[W_n:\nu \mapsto \underset{\nu+\delta\models C}{\underset{n \overset{C,X}{\rightarrow} n' \in \Tcal(\Gcal)}{\inf/\sup}}
W_{n'}(\nu+\delta[\reset{X}])\,.\] 
Thus by induction $W_n$ is also continuous and piecewise linear. 

Otherwise, $n=\Kcal_\ell$ for some $\ell\in \Kcal$. Let $K$ be the SCC containing $\ell$, and $T_{\mathsf{out}}$ the output edges leaving from $K$. Consider the kernel \wtg
$K_\ell=(L'_\Minsf,L'_\Maxsf ,G',\mathcal{X},T',\wout)$ where
\begin{itemize}
\item $G'=\setof{\ell_t}{t \in T_{out}}$.
\item For every $\Player$, $L'_\Player = K \cap L_\Player$.
\item $T'=T_{|K\times K}\cup \set{\ell'\overset{C,X}{\to} \ell_t \,|\, t:\ell'\overset{C,X}{\to} \ell'' \in T_{out}}$
\item for every $t:\ell'\overset{C,X}{\to} \ell'' \in T_{\mathsf{out}}$, $\wout(\ell_t,\cdot):\nu\mapsto W_{\ell''}(\nu)+w(t)$, which is piecewise linear by induction hypothesis.
\item $w'$ maps to $0$ always.
\end{itemize} 
Then $W_n=\Value_{K_\ell}(\ell,\cdot)$, which is piecewise linear according to \cref{lem-linear-kernel}.
\end{proof}

This is sufficient to conclude the proof of \cref{thm-dec}.

\thmDec*

\section{Simplifying Transformations of Kernel Games}
\label{sec-all-simpl}

Before proving \cref{lem-linear-kernel}, let us apply some useful simplifying transformations that preserve the value. These transformations happen in four steps:
\begin{description}
\item[Step $1$:]  Transform a WTG into a WTG with clock values in $\intfo01$.
\item[Step $2$:] Transform a \wtg into a region trimmed \wtg, \ie a \wtg where to every location is assigned a region, and without any \openquote useless'' transition, or guard on transition.
\item[Step $3$:] Transform a trimmed region kernel \wtg by relaxing every strict guard into a strict-or-equal guard.
\item[Step $4$:] Transform a relaxed trimmed region kernel \wtg such that every transition resets at least one clock.
\end{description}

\paragraph*{Commentaries:}
Step $1$ only serves to lighten notations in the rest of this article. In terms of state complexity, the transformations $1+2$ increase the number of locations as much as the classical region construction.

Trimming a \wtg in step $2$ is necessary: without it, Step $3$ would create pathological cases where relaxing some guards would allow a player to take transitions that would have been unreachable in the original \wtg.

Relaxing guards in step $3$ is a technique that has merits of its own outside of the scope of the proof. For instance, the value of a \wtg might not be reached by an optimal strategy for $\Minsf$, but could be the infimum produced by a set of strategies $\epsilon$-close to an optimal strategy in the relaxed \wtg.

Relaxing guard is also a prerequisite for step $4$, which is the most important of all four steps. In a two-clock \wtg, resetting at least one clock in each transition allows us to consider only regions in one dimension. Reducing a two-dimension problem to a one-dimension one is key to the termination argument in \cref{sec-value-kernel}.

\subsection{Restraining Clock Values to [0,1)}
\label{sec-zero-one}

Before presenting the well-known notions of regions and region \wtgs, let us first restrict the setting to 
$\intfo01$-\wtgs, which will simplify the region notations.

\begin{definition}
A \textbf{[0,1)-}\wtg is a weighted timed game where for every reachable configuration $(\ell,\nu)$, $0\leq \nu(x)<1$ for all clock $x$.
\end{definition}

\begin{lemma}
\label{lem-zero-one}
For any \wtg $\Gcal$,
there is an equivalent $\intfo01$-\wtg $\Gcal'$.  
\end{lemma}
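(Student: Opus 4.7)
The plan is to apply the standard integer-part encoding: store, for each clock $x$, its integer part (up to a threshold $M$, above which all values collapse to a single symbol $\top$) in the finite control of $\Gcal'$, while the clocks themselves track only the fractional part. By construction, every reachable valuation of $\Gcal'$ then lies in $\intfo{0}{1}^{\Xcal}$.

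Concretely, let $M$ be one more than the largest constant appearing in any guard of $\Gcal$, and let $K = \{0, 1, \ldots, M-1, \top\}^{\Xcal}$. The locations of $\Gcal'$ are $L \times K$, partitioned between the players as in $\Gcal$, with goals $G \times K$ and weights $w'(\ell, \vec{k}) = w(\ell)$. Each discrete transition $t = \ell \xrightarrow{C,X} \ell'$ is expanded into a family of transitions $(\ell, \vec{k}) \xrightarrow{C[\vec{k}], X} (\ell', \vec{k}[X := 0])$, one per $\vec{k} \in K$, carrying the same weight as $t$; here $C[\vec{k}]$ is obtained from $C$ by substituting the known integer-part information, so each atom $x \sim n$ becomes $\mathsf{true}$, $\mathsf{false}$, or a constraint on the fractional part of $x$ alone. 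To handle clock overflow, I add \emph{rollover} transitions $(\ell, \vec{k}) \xrightarrow{x = 1,\, \{x\}} (\ell, \vec{k}^{+x})$ of weight $0$, where $\vec{k}^{+x}$ increments $k_x$ by $1$ (capped at $\top$); these belong to the owner of $\ell$.

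The equivalence is then established via a bijection between maximal runs. A play of $\Gcal$ from $(\ell_0, \nu_0)$ is translated into a play of $\Gcal'$ from $((\ell_0, \lfloor \nu_0 \rfloor), \nu_0 - \lfloor \nu_0 \rfloor)$ by splitting each delay at the instants where a fractional part hits $1$ and inserting a rollover micro-step at each such instant; total elapsed time, discrete choices, and cumulative weight all agree, since rollovers contribute $0$. The main obstacle, and the point I would be careful about, is showing that this bijection is strategy-preserving: one must argue that rollover transitions are effectively \emph{forced}, so that neither player acquires additional strategic freedom from them. This follows because a player who skips a needed rollover leaves $\vec{k}$ inconsistent with the true integer parts, after which every non-rollover transition is disabled by its guard $C[\vec{k}]$; the rollover is then the unique available move. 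Strategies and values in $\Gcal$ and $\Gcal'$ are therefore in bijection, and the lemma follows.
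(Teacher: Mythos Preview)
Your approach---encoding integer parts in the control state and letting the clocks carry only fractional parts---is exactly the idea the paper invokes (citing \cite{bouyer2006}); the paper first bounds all clocks via \cite{behrmann2001} whereas you absorb large values into a $\top$ symbol, but these are interchangeable devices.

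The gap is in your ``forced rollover'' argument. You assert that once $\vec{k}$ falls out of sync with the true integer parts, \emph{every} non-rollover transition is disabled by its translated guard $C[\vec{k}]$. This fails whenever the original guard does not constrain the drifting clock. Concretely: with a single clock $x$ and a transition $t \colon \ell \xrightarrow{\emptyset,\,\emptyset} \ell'$ (empty guard, no reset), from $((\ell,0),\, 0.5)$ the owner of $\ell$ may delay $0.7$ and fire the translated copy of $t$, whose guard is still vacuous, reaching $((\ell',0),\, 1.2)$. This is a reachable configuration with $x \notin [0,1)$, so your $\Gcal'$ is not a $[0,1)$-\wtg in the paper's sense. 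The same issue arises with several clocks whenever a guard constrains some clocks but not others.

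The repair is routine but must be made explicit: conjoin $x \leq 1$ for every clock $x$ to every guard of $\Gcal'$, so that no transition fires once a fractional part has overshot $1$; then, for each translated transition, split on the set $S$ of clocks sitting exactly at $1$ at the moment of firing, add $S$ to the reset set, and increment the corresponding components of $\vec{k}$ in the target. This is precisely what the paper's closing remark (``every transition of $\Gcal'$, with a guard $x=1$ for some clock $x$, resets $x$'') is encoding. With that amendment your play-bijection argument goes through and all reachable valuations of $\Gcal'$ genuinely lie in $[0,1)$.
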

\begin{proof}
See \cite{bouyer2006}, Proposition 2 for a detailed proof. Their proof is for $1$-clock \wtg, however, the construction can easily be generalized to any number of clocks.

The intuition of the construction is that the information of the integer parts of the clock can be contained in the locations, while the clocks keep track only of the fractional part:
Let $\mathcal{G}=(L_\Minsf,L_\Maxsf,G,\mathcal{X},T,w)$ and let us build $\mathcal{G'}=(L'_\Minsf,L'_\Maxsf,G,\mathcal{X},T',w')$.
First, w.l.o.g., let us assume that all clocks are bounded by an integer $M$ (\cite{behrmann2001}). Then, for $\Player\in\set{\Minsf,\Maxsf}$, let $L_\Player'=L_\Player \times M^{|\Xcal|}$, and define $T'$ and $w'$ such that a valuation $\nu=(x_1,\dots,x_{|\Xcal|})\in \intfo01^\Xcal$ in a location $(\ell,n_1,\dots, n_{|\Xcal|})$ in $\Gcal'$ is equivalent to a valuation $(n_1+x_1,\dots,n_{|\Xcal|}+x_{|\Xcal|})$ in location $\ell$ in $\Gcal$. 
Note that every transition of $\Gcal'$, with a guard $x=1$ for some clock $x$, resets $x$.
\end{proof}

\subsection{Regions and Region Trimmed Games}
\label{sec-region-trimmed}

\begin{definition}
	Let $\mathcal X$ be a set of clocks in a $\intfo01$-\wtg. 
	A \textbf{region} over $\mathcal X$ is a tuple $r=(X_0,\dots,X_p,X_{=1})$ such that $X_i\neq \emptyset$ for all $1\leq i \leq p$, and
$\{X_0,\dots,X_p,X_{=1}\}$ is a partition of $\Xcal$: $\Xcal=\biguplus_{i=0}^pX_i$

	We denote by $\Reg_\mathcal X$ the set of regions over $\mathcal X$.
	A valuation $\nu$ is said to \textbf{belong} to the region $r$, denoted by $\nu\sqsubset r$, whenever
	\begin{itemize}
		\item $\forall x\in\mathcal X$, $\nu(x)=0\Leftrightarrow x\in X_0$,
		\item $\forall x\in\mathcal X$, $\nu(x)=1\Leftrightarrow x\in X_{=1}$,
		\item $\forall x,y\in\mathcal X$, $\ \nu(x)< \nu(y)<1\Leftrightarrow \exists i,j\in\{0,\dots, p\}\text{ s.t. } i<j\wedge x\in X_i\wedge y\in X_j$.
	\end{itemize}

For $r=(X_0,\dots, X_p, X_{=1})$ a region and $X$ a non-empty subset of $\Xcal$, we denote by $r[X:=0]$ the region $(X_0\cup X,X_1\setminus X,\dots, X_p\setminus X, X_{=1}\setminus X)$. In other words, $r[X:=0]$ is the region such that if $\nu$ belongs to $r$ then $\nu[X:=0]$ belongs to $r[X:=0]$.  A \textbf{$\intfo01$-region} is a region $(X_0,\dots, X_p, X_{=1})$ where $X_{=1}=\emptyset$. Let us abuse notation and denote $r$ by $(X_0,\dots, X_p)$. We denote by $\Reg^<_\mathcal X$ the set  $\intfo01$-regions over $\Xcal$.

A \textbf{time-successor} of a region $r$, with $r=(X_0,\dots, X_p)\in \Reg^<_\mathcal X$  is a region $r'=(X'_0,\dots, X'_{p'}, X'_{=1})$ such that either $r'=r$; or $X_0'=\emptyset$ and 
$X'_i=X_{i-1}$ for $1\leq i \leq p$, and either $X_p=X'_{=1}$ or $X_p=X'_{p+1}$ (and  then $X'_{=1}=\emptyset$).

\end{definition}

We often abuse notation and write $r$ for the set of valuations $\nu\sqsubset r$ it represents.

\begin{definition}[Region \wtg \cite{BBL2008, Busatto2018}]
Let $\mathcal G = (L_\mathsf{Min}, L_\mathsf{Max},G,\mathcal{X},T,w)$ be a $\intfo01$-\wtg.
The \textbf{region} \wtg $\mathcal R(\mathcal G)$ of $\Gcal$ is the $\intfo01$-\wtg $\mathcal R(\mathcal G)=(L'_\mathsf{Min}, L'_\mathsf{Max},G',\mathcal{X},T',w')$ with
\begin{itemize}
	\item $L'_\Player=L_\Player\times\Reg^<_\mathcal X$
	for $\mathsf P\in\{\Minsf,\Maxsf\}$.
	\item $G'=G\times\Reg^<_\mathcal X$.
	\item For every $r=(X_0,\dots X_p)\in \Reg^<_\Xcal$, for every $r'=(X'_0,\dots, X'_{p'}, X'_{=1})\in \Reg_\mathcal X$ a time-successor of $r$,
	if $\ell \xrightarrow{C,X} \ell' \in T$ then
	$(\ell,r) \xrightarrow{C\cup C(r'),X} (\ell',r'[X:=0])$,
with $$C(r')=\setof{(x=0)}{x\in X'_0}\cup \setof{(x=1)}{x\in X'_{=1}}\cup \setof{(0<x<1)}{x\in X'_i, 1\leq i\leq p}.$$
	\item For $\ell\in L_\Minsf\cup L_\Maxsf$ and $r\in\Reg^<_\mathcal X$, $w'(\ell,r)=w(\ell)$.
	\item For $t=(\ell,r) \xrightarrow{C\cup C(r'),X} (\ell',r'[X:=0])\in T'$, 
	$w'(t)=w\left(\ell \xrightarrow{C,X} \ell'\right)$.
\end{itemize}
\end{definition}

While applying simplifying transformations to $\Rcal(\Gcal)$, we wish to preserve the \openquote one-region-per-location'' property. Thus let us formally define what is \emph{a} region \wtg (as opposed to \emph{the} region \wtg).
A \wtg $\Gcal$ is \textbf{a region weighted timed game} if there is a region assignment $\reg: L\to\Reg^<_\mathcal X$, such that
for any transition $t:\ell\xrightarrow{C,X} \ell'$, the valuations $\nu+\delta$ with $\nu\sqsubset \reg(\ell)$ and $\delta\geq 0$ that satisfy $C$ are contained in a unique region $r$, such that $r[X:=0]=\reg(\ell')$. Furthermore, for any initial configuration $(i,\nu)$, we require $\nu\sqsubset \reg(i)$.

	
	Obviously $\Rcal(\Gcal)$ is a region \wtg for any $\intfo01$-\wtg $\Gcal$.
	For any location $\ell$ in a region \wtg, let $\upclock{\ell}\eqdef X_p$ with $\reg(\ell)=(X_0,\dots X_p)$.

	Let us now \openquote trim" $\Rcal(\Gcal)$, \ie delete every useless transition, and every useless guard on transitions :

	\begin{definition}[Trimmed region \wtg]
	
	A  region $\intfo01$-\wtg $\Gcal$ is \textbf{trimmed} if for any transition $t:\ell\xrightarrow{C,X} \ell'$  and any region $r=\reg(\ell)$ in $\Gcal$,
	\begin{itemize}
		\item for any valuation $\nu\sqsubset r$ there exist some $\delta\geq0$ such that $\nu+\delta\models C$.
		\item for any $c\in C$, there exists a valuation $\nu\sqsubset r$ and some $\delta\geq 0$ such that $\nu+\delta$ is in $\intff01^\Xcal$ and
		$\nu+\delta\not\models c$.
	\end{itemize}
\end{definition}

In other words, there is no inaccessible transitions from any tuple location-region. Furthermore, there is no unnecessary clauses in $C$ (the ones that are always verified from the region). Removing inaccessible transitions and unnecessary clauses can always be done from any region \wtg without change in value.

Since every $\intfo01$-\wtg $\Gcal$ is equivalent to the region $\intfo01$-\wtg $\Rcal(\Gcal)$, and every region \wtg is equivalent to a trimmed region \wtg, we can always assume $\Gcal$ to be a trimmed region \wtg.

\begin{observation}
\label{lem-trimmed}
	For $\Gcal$ a trimmed region $\intfo01$-\wtg, for any transition $t:\ell \xrightarrow{C,X} \ell'$, with $\reg(\ell)=(X_0,\dots, X_p)$:
	\begin{itemize}
		\item if $(y=0)\in C$ or $(y>0)\in C$ for some clock $y$ then $y\in X_0$, in other words $y$ is $0$ on the whole region $r$.
		\item if $(y=1)\in C$ for some clock $y$, then $y\in X_p$ in other words $y$ is one of the clocks with largest value in $r$.
		\item there cannot be both $x=0$ and $y=1$ in $C$ for any two clocks $x,y$.
	\end{itemize}
\end{observation}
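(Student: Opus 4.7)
The three bullets follow by picking apart the definition of a trimmed region $[0,1)$-\wtg, so the plan is to treat each item as a separate small contradiction, using the two trimming conditions (existence of a witness $\delta$, and existence of a valuation that fails each individual clause $c\in C$) together with the fact that in a $[0,1)$-\wtg every valuation lies in $[0,1)^{\Xcal}$.

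For the first bullet, I would argue by contradiction in two sub-cases. Suppose $(y=0)\in C$ and $y\notin X_0$. Then for every $\nu\sqsubset r$ one has $\nu(y)>0$, hence $(\nu+\delta)(y)\geq \nu(y)>0$ for every $\delta\geq 0$, so no valid delay can satisfy $y=0$; this contradicts the first trimming condition which guarantees a $\delta$ with $\nu+\delta\models C$. Now suppose $(y>0)\in C$ and $y\notin X_0$. Again $\nu(y)>0$ for all $\nu\sqsubset r$, so $(\nu+\delta)(y)>0$ holds automatically for every $\nu$ and every $\delta\geq 0$; then the clause $y>0$ is redundant, contradicting the second trimming condition (which demands some $\nu$ and some $\delta$ witnessing $\nu+\delta\not\models (y>0)$).

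For the second bullet, assume $(y=1)\in C$ and pick any $\nu\sqsubset r$. By the first trimming condition, there exists $\delta\geq 0$ with $\nu+\delta\models C$, so in particular $(\nu+\delta)(y)=1$, forcing $\delta=1-\nu(y)$. In a region \wtg the valuation $\nu+\delta$ must still lie in $[0,1]^{\Xcal}$ (it is attached to the unique region $r'$ guaranteed by the region-\wtg assignment), hence $\nu(x)+1-\nu(y)\leq 1$, i.e.\ $\nu(x)\leq \nu(y)$, for every $x\in\Xcal$. Since this must hold for every $\nu\sqsubset r$, the clock $y$ is a clock of maximum value in $r$, i.e.\ $y\in X_p$.

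For the third bullet, combine the first two. If $x=0$ and $y=1$ both lie in $C$, then on the one hand $x\in X_0$, so any $\nu\sqsubset r$ has $\nu(x)=0$, and satisfying $x=0$ after delay forces $\delta=0$; on the other hand $y\in X_p$, and satisfying $y=1$ forces $\delta=1-\nu(y)$. Since we are in a $[0,1)$-\wtg we have $\nu(y)<1$, so this second delay is strictly positive, contradicting $\delta=0$. I do not anticipate any real obstacle: everything reduces to checking what clock configurations force a clause to be either unsatisfiable or tautological on $r$, and in a $[0,1)$-\wtg the edge cases at $0$ and $1$ are precisely what the trimming definition is designed to rule out.
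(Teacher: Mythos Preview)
Your proposal is correct and follows essentially the same approach as the paper's proof, just fleshed out in more detail: the paper argues each bullet in a single sentence (unsatisfiable or redundant clause for the first, a clock exceeding $1$ for the second, incompatible delays for the third), and you spell out the same contradictions explicitly using the two trimming conditions and the $[0,1)$ bound.
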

\begin{proof}
	For the first point, notice that if $y$ was not $0$, then either the transition or the clause would have been trimmed. For the second one, if $y\notin X_p$ then upon taking the transition with condition $y=1$ there is a clock $x\in X_p$ such that $x>y=1$.
	For the third point, for any valuation in $r$, there are no $\delta$ such that $\nu + \delta$ satisfies both clause.
\end{proof}

\subsection{Relaxing Strict Guards}
\label{sec-relax}
A kernel \wtg can easily be transformed into a kernel \wtg without strict guards, without change in value.

\begin{restatable}{definition}{defAdh}
Let $r=(X_0,\dots,X_p)$. Then the \textbf{adherence} or $r$, denoted by $\overline{r}$, is the set of regions 
of the form $(Y_0,\dots,Y_{p'},Y_{=1})$ with $p'\leq p$ and $\iota: \intff0{p'+1} \to \intff0p$ strictly increasing such that $\iota(p'+1)=p$ and $Y_0= X_0 \cup \dots \cup X_{\iota{0}}$ and $Y_i= X_{\iota(i-1)+1}\cup \dots \cup X_{\iota(i)}$ for all $1\leq i \leq p'$ and $Y_{=1}= X_{\iota(p')+1}\cup \dots \cup X_{\iota(p'+1)}$.
\end{restatable}
Let us abuse notation and write $\nu\sqsubset \overline{r}$ when $\nu\sqsubset {r'}$ for  $r'\in \overline{r}$.

\begin{restatable}{lemma}{thmspicy}
\label{thm-sim-spicy}
Let $\Gcal_\prec=(L_\Minsf^\prec, L_\Maxsf^\prec,G,\Xcal,T^\prec, \wout^\prec)$ be a trimmed region kernel \wtg. Let $\Gcal_\preceq = (L_\Minsf^\preceq, L_\Maxsf^\preceq,G,\Xcal,T^\preceq,\wout^\preceq)$ be a copy of $\Gcal_\prec$ where 
\begin{itemize}
\item every guard has been relaxed, i.e., every guard of the form $x>0$ and $x<1$ have been replaced by $x\geq 0$ or $x\leq 1$, respectively, for all $x\in\Xcal$.
\item  For any $\ell\in G$, the output function ${\wout}^\preceq(\ell,\cdot)$ is $\wout^\prec(\ell,\cdot)$ extended continuously to $\regadh{\ell}$ in $\Gcal_\preceq$.
\end{itemize}
Then $\Value_{\Gcal_\prec}=\Value_{\Gcal_\preceq}$.
\end{restatable}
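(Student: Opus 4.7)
The plan is to establish both inequalities $\Value_{\Gcal_\prec}\leq\Value_{\Gcal_\preceq}$ and $\Value_{\Gcal_\preceq}\leq\Value_{\Gcal_\prec}$ via symmetric $\varepsilon$-approximation arguments. The two games share the same locations, discrete transitions and weight function (all internal weights being $0$ in a kernel \wtg); they differ only in the set of delays admissible at transitions carrying strict clauses. By \cref{lem-trimmed}, every strict clause in $\Gcal_\prec$ is of the form $x>0$ or $x<1$, so relaxing it replaces the set of admissible delays at that transition by its topological closure. As a direct consequence, every play of $\Gcal_\prec$ starting from $(\ell_0,\nu_0)$ with $\nu_0\sqsubset\reg(\ell_0)$ is also a play of $\Gcal_\preceq$, and because $\wout^\preceq$ is the continuous extension of $\wout^\prec$, the two plays carry identical weight.

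For $\Value_{\Gcal_\preceq}\leq\Value_{\Gcal_\prec}$, I fix $\varepsilon>0$ and an $\varepsilon$-optimal $\Minsf$-strategy $\sigma$ in $\Gcal_\prec$, and lift it into $\Gcal_\preceq$ by extending arbitrarily on histories that are unreachable in $\Gcal_\prec$. The only residual issue is that $\Maxsf$ may now exploit boundary delays that were forbidden in $\Gcal_\prec$, namely $d=0$ where $\Gcal_\prec$ required $d>0$, or a delay bringing some clock exactly onto $\{1\}$ where $\Gcal_\prec$ demanded strict inequality. I approximate each such boundary choice by the nearby interior delay $d\pm\eta$ for a small $\eta>0$. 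Because all internal weights vanish, the sole contribution to the weight of a finite-weight play is $\wout^\preceq(\ell,\nu)$ at the moment of exit into $G$; continuity of $\wout^\preceq$, together with its piecewise-linear behaviour on finitely many regions, yields a uniform modulus of continuity that keeps this contribution within $\varepsilon$ for $\eta$ small enough. Infinite plays produce weight $+\infty$ in both games and hence match automatically. This produces $\Value_{\Gcal_\preceq}\leq\Value_{\Gcal_\prec}+2\varepsilon$, and the inequality follows as $\varepsilon\to 0$. The reverse inequality is entirely symmetric: start from an $\varepsilon$-optimal $\Maxsf$-strategy in $\Gcal_\preceq$ and approximate $\Minsf$'s potential boundary moves inward by $\eta$.

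The main obstacle will be controlling the accumulation of the perturbation $\eta$ across the successive kernel transitions of a single play, so that the final valuation at exit into $G$ remains within the prescribed tolerance. This requires choosing $\eta$ uniformly as a function of $\varepsilon$ and of the Lipschitz constants of $\wout^\preceq$ on the (finitely many) accessible regions, together with an a~priori bound on the length of finite-weight plays reaching $G$. Both ingredients are available: the Lipschitz bound follows from continuity and piecewise-linearity on finitely many closed regions, and the length bound can be enforced by restricting attention to region-uniform, memoryless $\varepsilon$-optimal strategies, under which only finitely many distinct intermediate configurations arise before the kernel is exited.
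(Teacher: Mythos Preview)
Your overall plan---approximate boundary moves by nearby interior moves and exploit the zero internal weights plus Lipschitz output---is the right intuition, and it is essentially what the paper formalises. But the execution has a real gap at exactly the point you flag as ``the main obstacle'': controlling the accumulation of the perturbation~$\eta$ along a play.

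You claim the accumulation can be bounded via an a~priori bound on the length of finite-weight plays, obtained by restricting to region-uniform memoryless $\varepsilon$-optimal strategies. This does not work in a kernel \wtg. The kernel consists precisely of $0$-weight cycles, so even when $\Minsf$ commits to a memoryless region-uniform strategy, $\Maxsf$ can loop through the kernel arbitrarily many times before (if ever) taking an output edge; there is no uniform bound on the number of transitions before $G$ is reached. Moreover, ``finitely many distinct intermediate configurations'' is not available either: valuations range over a continuum within each region, so memorylessness does not collapse the state space. With unbounded play length, a fixed per-step perturbation~$\eta$ yields an unbounded drift at exit, and your continuity argument for $\wout$ collapses.

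The paper circumvents this by a bisimulation $R_\varepsilon$ in which the $\varepsilon$-closeness is \emph{preserved}, not accumulated, at every step: whenever $(\ell,\nu)\,R_\varepsilon\,(\ell,\nu')$ and one side takes a delayed transition, the other side can match with a delay that keeps the successors $\varepsilon$-neighbours for the \emph{same}~$\varepsilon$. The case analysis in \cref{lem-bisim-spicy-all} (on guards $x=1$, $x>0$, $x<1$) is exactly the work needed to show this invariance, using the asymmetric $(\epsilon_1,\epsilon_2)$ definition of $\varepsilon$-neighbour. Once that is in place, \cref{lem-bisim-WTG} bounds $|\Value_{\Gcal_\prec}-\Value_{\Gcal_\preceq}|$ by the maximal output-function discrepancy over $\varepsilon$-close exit valuations, which is $O(\varepsilon)$ independently of play length. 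Your argument would be repaired by proving precisely this invariance; as written, the length-bound route does not close.
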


The proof of this theorem relies on a bisimulation argument that is detailed in the appendix.

Note that $\Gcal_\preceq$  is not a $\intfo01$-\wtg, but a $\intff01$-\wtg, i.e., every accessible valuation is in $\intff01^\Xcal$.
Furthermore, $\Gcal_\preceq$ is not a region \wtg:


\begin{restatable}{definition}{relaxedreg}
\label{relaxedreg}
A \textbf{relaxed region} \wtg is a $\intff01$-\wtg without strict guard where to each location $\ell$ is assigned a region $\reg(\ell)$:
for any transition $t:\ell\xrightarrow{C,X} \ell'$, the valuations $\nu+\delta$, with $\nu\sqsubset \regadh{\ell)}$ and $\delta\geq 0$, that satisfy $C$ are contained in $\overline{r}$ for a unique region $r$, such that $r[X:=0]=\reg(\ell')$. Furthermore,
 initial configuration $(i,\nu)$ must verify $\nu\in\regadh{i}$.
A \textbf{relaxed trimmed region} \wtg is a relaxed region \wtg
if for any transition $t:\ell\xrightarrow{C,X} \ell'$  and any region $r=\reg(\ell)$ in $\Gcal$,
	\begin{itemize}
		\item for any valuation $\nu\sqsubset \overline{r}$ there exist some $\delta\geq0$ such that $\nu+\delta\models C$.
		\item for any $c\in C$, there exists a valuation $\nu\sqsubset \overline{r}$ and some $\delta\geq 0$ such that $\nu+\delta$ is in $\intff01^\Xcal$ and
		$\nu+\delta\not\models c$.
	\end{itemize}
\end{restatable}

\begin{lemma}
Let $\Gcal_\prec$ be a region trimmed $\intfo01$-\wtg of region assignment $\reg$. 
Let $\Gcal_\preceq$ be a copy of $\Gcal_\prec$ where 
\begin{itemize}
\item every guard has been relaxed, i.e., every guard of the form $x>0$ and $x<1$ have been replaced by $x\geq 0$ or $x\leq 1$ respectively, for all $x\in\Xcal$.
\item useless guards (see the second point of \cref{relaxedreg}) have been removed.
\end{itemize}
Then $\Gcal_\preceq$ is a relaxed trimmed region \wtg, of region assignment $\reg$.
\end{lemma}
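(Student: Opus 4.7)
The plan is to check, one by one, each clause of the definition of relaxed trimmed region \wtg for the game $\Gcal_\preceq$ produced by the construction. Several clauses are immediate: $\Gcal_\preceq$ has no strict guard by design, every reachable valuation lies in $\intff01^\Xcal$ since $\Gcal_\prec$ is a $\intfo01$-\wtg and relaxation can only add boundary valuations, and the initial configurations of $\Gcal_\prec$ satisfy $\nu\sqsubset\reg(i)$, hence $\nu\in\regadh{i}$. The substantive points are (ii) that for each transition $t:\ell\xrightarrow{C,X}\ell'$ the relaxed set of post-guard valuations lies in $\overline{r}$ for a unique time-successor region $r$ of $\reg(\ell)$ with $r[X:=0]=\reg(\ell')$, and (iii) the two trimming conditions of \cref{relaxedreg}.

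For (ii), fix such a $t$ and let $r$ be the unique time-successor region of $\reg(\ell)$ into which the original strict guard $C$ funnels every $\nu+\delta$ with $\nu\sqsubset\reg(\ell)$, $\delta\geq 0$, $\nu+\delta\models C$. Denote by $C'$ the relaxed guard. The goal is to show that the set of $\nu+\delta$ with $\nu\sqsubset\regadh{\ell}$, $\delta\geq 0$, $\nu+\delta\models C'$ is exactly $\overline{r}$. The inclusion of $\overline{r}$ into this set is a continuity argument: every point of $\overline{r}$ is a limit of points of $r$, each achievable from $\reg(\ell)$ under the original guard, and $C'$ defines a topologically closed condition, hence is satisfied in the limit. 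The reverse inclusion is where the trimmed hypothesis plays a critical role: by \cref{lem-trimmed}, any strict clause $y>0$ in $C$ forces $y$ to be among the clocks that are zero on $\reg(\ell)$, and symmetrically a clause $y<1$ constrains $y$ to be a maximal clock at firing time. Relaxing such clauses only exposes the boundary valuations of $\reg(\ell)$ and the boundary between $r$ and its adjacent time-successors, none of which escape $\overline{r}$. Applying the reset then yields $\overline{r}[X:=0]=\regadh{\ell'}$, as required.

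For (iii), the first trimming condition — that every $\nu\sqsubset\overline{r}$ admits some $\delta\geq 0$ with $\nu+\delta\models C'$ — follows by approximating $\nu$ by valuations in $r$, invoking the original trimming of $\Gcal_\prec$, and passing to the limit using that $C'$ is closed. The second trimming condition holds by assumption, since useless guards are explicitly removed after relaxation. Combined with the unchanged region assignment $\reg$, this shows that $\Gcal_\preceq$ is a relaxed trimmed region \wtg with assignment $\reg$.

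The main obstacle will be the unique-region property (ii): one must rule out that relaxation makes a guard fireable from valuations landing in a time-successor region other than $r$, which would break the \openquote one region per location'' property. This is precisely what the trimmed hypothesis on $\Gcal_\prec$ prevents through \cref{lem-trimmed}; in an untrimmed game, a clause $y>0$ could coexist with a region where $y$ takes strictly positive values, and relaxing it would genuinely create new behaviour that crosses several regions at once.
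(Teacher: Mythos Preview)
The paper states this lemma without proof, treating it as an immediate consequence of the definitions; your outline therefore supplies what the paper omits, and the overall strategy (verify each clause of \cref{relaxedreg}, with the substantive work being the containment in $\overline{r}$ and the first trimming condition via closure/limit arguments) is sound.

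Two small imprecisions are worth flagging. First, \cref{lem-trimmed} does not cover clauses of the form $y<1$; it handles only $y=0$, $y>0$, and $y=1$. The analogous fact you use---that $(y<1)\in C$ in a trimmed $\intfo01$-\wtg forces $y\in X_p$---is true and follows by the same reasoning (if $y$ were not maximal, some clock would exceed $1$ before $y$ reaches $1$, making the clause $y<1$ useless), but you should prove it rather than cite \cref{lem-trimmed}. Second, in part (iii) you write ``every $\nu\sqsubset\overline{r}$'' where you mean $\nu\sqsubset\regadh{\ell}$; in (ii) you used $r$ for the firing region, so this is a notational slip. Also note that the definition only asks for containment in $\overline{r}$, not equality, so the inclusion $\overline{r}\subseteq S'$ you sketch is not needed (though it does hold).
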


\subsection{Adding Resets to every Transition}
\label{sec-simpl}

\begin{restatable}{lemma}{allreset}
\label{thm-all-reset}
For any relaxed region trimmed kernel $\intff01$-\wtg $\Gcal$, such that $\Gcal$ has no requirement $x<1$ for any $x\in\Xcal$ then
there exists a relaxed region trimmed kernel $\intff01$-\wtg $\Gcal'$ of same value and verifying the same conditions such that every transition of $\Gcal'$ is a reset transition or a transition to the target location.
Furthermore, any transition of $\Gcal'$ with, for some clock $x$, a guard of the form $x=0$ or $x=1$, resets $x$.
\end{restatable}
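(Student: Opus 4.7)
My plan is to apply a sequence of semantics-preserving transformations to $\Gcal$ to ensure that every non-target transition resets at least one clock, while preserving the value and the relaxed trimmed region kernel structure with no $x<1$ requirement.

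The first (easy) step handles transitions with a guard $x=0$. For every transition $t : \ell \xrightarrow{C, X} \ell'$ with $x=0 \in C$ and $\ell' \notin G$, I add $x$ to the reset set $X$. This is safe because $\nu(x) = 0$ when $t$ fires (by the guard), so resetting $x$ is idempotent. By Observation~\ref{lem-trimmed}, $x \in X_0$ of $\reg(\ell)$, hence the region assignment is preserved and no guard becomes trivial.

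The second (main) step handles non-reset non-target transitions $t : \ell \xrightarrow{C, \emptyset} \ell'$. After Step~1, the only way such a $t$ can survive in a trimmed relaxed WTG with no $x<1$ requirement is for $C$ to contain $x=1$ for some clock $x$ (other guards are trivial and have been trimmed, $x=0$ guards are now resets). The key observation is that once $\nu(x) = 1$ in $\ell'$, no positive delay is allowed in the $[0,1]$-WTG, so the behaviour at $\ell'$ with $\nu(x)=1$ reduces to a finite, purely combinatorial subgame. To exploit this, I duplicate $\ell'$ as a fresh location $\ell'_t$ controlled by the same player as $\ell'$ and reachable only via $t$. I then inline $\ell'$'s outgoing transitions into $\ell'_t$ while forcing each to reset $x$: for every $t' : \ell' \xrightarrow{C', X'} \ell''$ that is satisfiable at $x=1$ with zero delay, I add a transition $\ell'_t \xrightarrow{C', X' \cup \{x\}} \ell''$. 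Finally, I redirect $t$ to $\ell'_t$ adding $x$ to its reset set, which turns $t$ into a reset transition whose target region now has $x \in X_0$, matching the reset.

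The main obstacle is the inlining when an outgoing $t'$ of $\ell'$ does not reset $x$: in the original game, $x = 1$ persists to $\ell''$ and we must recursively apply Step~2 from $\ell''$. I handle termination by observing that cycles of non-reset, $x=1$-preserving transitions are confined to a single region (since $x=1$ forbids delay and no clock is modified along the cycle), and thus form a finite instantaneous reachability subgame on the SCC of such transitions. I precompute, for each configuration in this subgame, the set of possible ``first reset or target'' transitions reachable under $\Minsf$/$\Maxsf$ optimal play, and encode these outcomes as the new outgoing transitions of $\ell'_t$; since each encoded transition ends either in a target location or in a reset, the recursion terminates. Finally, I verify that $\Gcal'$ retains the relaxed trimmed region kernel $[0,1]$-\wtg structure with no $x<1$ requirement: the duplications preserve the region assignment by construction, no new strict or $x<1$ guard is introduced (we only add $x=1$-compatible guards and resets of $x$), and a final trimming pass removes any guards that became trivial.
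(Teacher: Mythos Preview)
Your Step~2 rests on a claim that is false: after Step~1, a non-reset non-target transition $t:\ell\xrightarrow{C,\emptyset}\ell'$ need \emph{not} contain an $x=1$ guard, because $C$ can be empty. Concretely, take $\reg(\ell)=(\{x\},\{y\})$ and the region-\textsc{wtg} transition to the time-successor $r'=(\emptyset,\{x\},\{y\})$. Its guard $C(r')$ consists only of strict constraints $0<x<1$, $0<y<1$; after relaxing these become $x\geq 0$, $x\leq 1$, $y\leq 1$, all of which are trivially satisfied in a $[0,1]$-\textsc{wtg} and hence removed by trimming. You are left with $C=\emptyset$, $X=\emptyset$, and your whole Step~2 mechanism (which relies on $x=1$ forcing zero delay at $\ell'$) does not apply. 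From such a transition the controlling player may wait an \emph{arbitrary} delay in $[0,1-\nu(y)]$ before moving, so nothing forces the successor configuration to be frozen.

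This is precisely the case the paper spends most of its effort on, and it requires two ideas you do not have. When $\ell$ and $\ell'$ belong to the \emph{same} player, one shortcuts $t$ by composing it with each outgoing transition of $\ell'$ (legitimate because the same player controls both moves and all weights are zero in a kernel). When $\ell$ and $\ell'$ belong to \emph{different} players, one needs the game-theoretic Lemma~\ref{lem-no-mixed-gadgets}: in a kernel \textsc{wtg}, restricting such a transition by adding the guard $x=1$ for $x\in X^\uparrow_\ell$ does not change the value, because waiting is free and whichever player benefits from a larger delay can take it on their own turn. Only after this preprocessing does every surviving non-reset non-target transition carry an $x=1$ guard, at which point one can apply a duplication construction (the paper's $\ell_\downarrow$ copies) in the spirit of your Step~2. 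Your inlining-with-recursion idea is roughly in that direction, but without first eliminating the unconstrained transitions it cannot get off the ground.
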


See \cref{sec-proof-all-reset} for the proof.

\section{Value Iteration in Two-clock Kernel Games}
\label{sec-value-kernel}
In this section, we prove \cref{lem-linear-kernel}
using the value iteration paradigm (see \cite{alur2004,BGHM16,brihaye2022}):

Let $\Gcal=(L_\Minsf,L_\Maxsf,G,\Xcal,T,w \text{ or } \wout)$ be a trimmed region (kernel) \wtg.
In a \wtg, the \textbf{value iteration algorithm} builds, for each location $\ell$ and for all $k\geq 0$, a function $\opt{\ell}{k}: \Reel_{\geq 0}^\Xcal \to \Rbb$ such that
$\opt{\ell}{k}(\nu)$ is the value of the game started in $\ell$ with clock valuation $\nu$, where $\Minsf$ has to win in at most $k$ steps.
The $\opt{}{}$ functions are built inductively:
\begin{itemize}
\item $\opt{\ell}{0}$ is the constant $0$ function if $\ell \in G$ (or $\wout(\ell,\cdot)$ in the case of a kernel \wtg), or the constant $+\infty$ function otherwise.
\item for any $k\in \Nat$, $\opt{\ell}{k+1}$ is obtained from the $\opt{}{}$ functions at step $k$: if $\ell$ belongs to $\Minsf$ (resp.~$\Maxsf$), then
\[\opt{\ell}{k+1}(\nu)=\inf \text{(resp.~$\sup$) } \setof{\opt{\ell'}{k}((\nu+\delta)[X:=0])}{\ell \overset{C,X}{\rightarrow} \ell'\in T, \nu+\delta \models C}\,.\]
\end{itemize}

Note that $\opt{\ell}{k}(\nu)=\Value^{\leq k}(\ell,\nu)$ the value of the game started from configuration $\ell,\nu$ when $\Min$ must reach $G$ in at most $k$ steps.
Naturally, $\opt{\ell}{k+1}(\nu)\leq \opt{\ell}{k}(\nu)$ for all valuation $\nu$. If there exists $k$ such that, for all locations $\ell$, $\opt{\ell}{k+1}= \opt{\ell}{k}$, then the value iteration algorithm terminates.

In general, there is no termination guarantee.
However, if there exists $k$ such that $\opt{\ell}{k+1}= \opt{\ell}{k}$ for all $\ell$, then $\opt{\ell}{k}(\nu)=\Value_\Gcal(\ell,\nu)$.
 This means that the value of the \wtg is obtained even when considering plays of length at most $k$.
 
Here is an example where the value iteration algorithm does not terminate.

\begin{figure}
\centering
\includegraphics[scale=1]{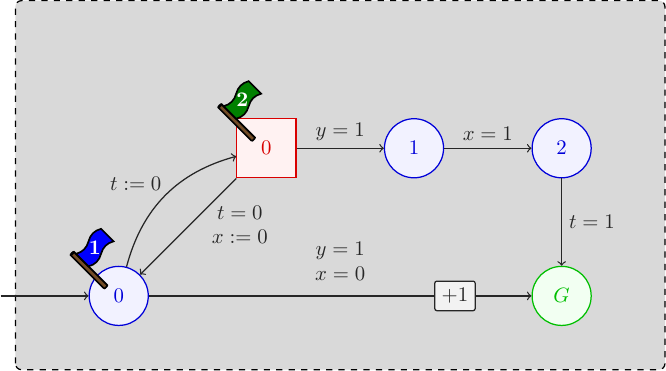}
\caption{ A \wtg where the value iteration algorithm does not terminate.
Blue circle locations belong to $\Minsf$, red square locations belong to $\Maxsf$, the green circle location $G$ is the target. The flags serve to easily refer to locations. The $+1$ label is a transition cost.}
\label{fig-3clock}
\end{figure}

\begin{example}
The \wtg $\Gcal$ in \cref{fig-3clock} is a $3$-clock, almost non-Zeno \wtg with a value of $1$. The kernel of $\Gcal$ contains only the cycle between \includegraphics[scale=.5]{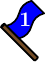} and \includegraphics[scale=.5]{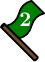}.
The cost of the output edge in \includegraphics[scale=.5]{flags/flag2.pdf} is $a+2\delta$ for a valuation $(x,y,t)=(\delta, a+\delta,0)$, whereas the cost of the output edge in \includegraphics[scale=.5]{flags/flag1.pdf} is exactly $1$, for a valuation $(x,y,t)=(0, 1,0)$.

An optimal strategy for $\Minsf$ is to loop in the kernel an arbitrary number of times: In this strategy, each time she enters \includegraphics[scale=.5]{flags/flag1.pdf} with valuation $(x,y,t)=(0,a,t)$, she should waits $\delta$ such that $2\delta = 1-a$ and enters \includegraphics[scale=.5]{flags/flag2.pdf}, then $\Maxsf$ can either reach $G$ with a cost of exactly $a+2\delta=1$, or return to \includegraphics[scale=.5]{flags/flag1.pdf}. When $\Minsf$ decides to end the game, she then picks $\delta=1-a$ instead. Then $\Maxsf$ chooses between going to $G$ at cost $1 + (1-a)$, or letting $\Minsf$ leave from \includegraphics[scale=.5]{flags/flag1.pdf} at cost $1$.
 Each time $\Minsf$ takes the cycle with delay $\delta$ such that $2\delta = 1-a$ , $y$ gets closer to $1$, thus minimizing the cost of picking 
 $\delta=1-a$ at some point. Thus $\Minsf$ has a strategy to reach $G$ with cost $>1$, but arbitrarily close to $1$ depending on how long she plays.
This entails that the value of this \wtg is obtained by considering arbitrarily long plays. Thus, the value iteration algorithm does not terminate on this example.
\end{example}

However, the value iteration algorithm terminates for two-clock kernel \wtgs.

\thmKernel*

\begin{proof}
Without loss of generality, let us assume that $\Gcal$ is a relaxed trimmed region kernel $\intff01$-\wtg where every transition is either a transition to a target location, or resets at least one clock (see \cref{lem-zero-one,thm-sim-spicy,thm-all-reset}). Furthermore, we assume that there is no transition to the initial location $i$.
\footnote{This can be done by making a copy of the initial location, such that all transition that should enter the initial location only enter the copy instead.}

For every location $\ell\not\in G \cup \{i\}$, the set of valuations $\regadh{\ell}$  is either $\setof{(0,y)}{y\in \intff01}$ or $\setof{(x,0)}{x\in \intff01}$, or the singleton $\{(0,0)\}$. Note that the $\opt{\ell}{}$ functions will be defined on $ \regadh{\ell}$ for any location $\ell$. This entails that the value iteration algorithm will mainly build $1$-dimensional functions. 

Let us highlight this observation with a subtle change of variable: Let $\Delta$ the \textbf{circular clock difference} of a valuation $\nu\in\regadh{\ell}$ be defined as: 
\[\Delta(\nu)=\begin{cases}
y \text{ when } \nu=(0,y)\text{ with $y\geq 0$,}\\
1-x \text{ when } \nu=(x,0) \text{ with $x>0$.}
\end{cases}\]

Now, for any  $\ell\not\in G \cup \{i\}$, for any $k\in\Nat$, let
$\Opt{\ell}{k}(\Delta(\nu))\eqdef\opt{\ell}{k}(\nu)$ for all $\nu\in \regadh{\ell}$. The function $\Opt{\ell}k$ is either defined on $\{0\}$ if $\regadh{\ell}=\{(0,0)\}$, or on $\intff01$.


\begin{figure}[t]
\centering
\scalebox{.8}{
\begin{tikzpicture}

\def\xx{5}
\def\dd{2}
\def\xdd{3}
\def\ee{4}
\def\nn{.1}
\def\zz{7}

\fill[gray!10] (0, 0) -- (0,\xx) -- (\xx,\xx) -- (0,0);
\draw[gray, very thin] (0, 0) -- (\xx,0)-- (\xx, \xx);
\draw[->,gray, very thin] (\xx, 0) -- (\xx+.2, 0);
\draw[->,gray, very thin] (0, \xx) -- (0,\xx+.2);
\draw[dashed,gray, very thin] (-.6, \dd) -- (-1.5,\dd);
\draw[dashed,gray, very thin] (-.2, \dd) -- (0,\dd);
\draw[thick] (0, 0) -- (0,\xx) -- (\xx,\xx) -- (0,0);
\node[gray] at (\xx+.4,0){$x$};
\node[gray] at (0,\xx+.4){$y$};

\node (A) at (0,\dd){};
\node (B) at (\ee-\dd,\ee){};
\node (C) at (0, \ee){};

\draw [blue, very thick] (0,\dd) -- (\ee-\dd,\ee);
\draw [blue, very thick,dashed] (\ee-\dd,\ee) --(0,\ee) ;
\draw [gray, very thin] (\ee-\dd,\ee) -- (\xx-\dd,\xx);

  \foreach \y in {.2,.4,...,1.8, 2.2, 2.4, ..., \xdd}{

  \draw [gray,thin] (\y,\y+\dd) --(0,\y+\dd);

  }
  
  \draw[->] (-1.6, 0) -- (-1.6, \xx+.2);

\node at (-.4,\dd){$\nu$};
\node at (\ee-\dd+.7,\ee){$\nu+\delta$};
\node at (-.7,\ee+.2){$\nu+\delta$};
\node at (-.7,\ee-.2){$[x:=0]$};

\draw (-1.7,\dd) -- (-1.5,\dd);
\draw (-1.7,\ee) -- (-1.5,\ee);
\draw (-1.7,0) -- (-1.5,0);
\node at (-2,\dd){$\Delta$};
\node at (-2.3,\ee){$\Delta+\delta$};

\fill[gray!10] (0+\zz, 0) -- (\zz,\xx) -- (\xx+\zz,\xx) -- (0+\zz,0);
\draw[gray, very thin] (0+\zz, 0) -- (\xx+\zz,0)-- (\xx+\zz, \xx);
\draw[->,gray, very thin] (\xx+\zz, 0) -- (\xx+.2+\zz, 0);
\draw[->,gray, very thin] (0+\zz, \xx) -- (0+\zz,\xx+.2);
\draw[thick] (0+\zz, 0) -- (0+\zz,\xx) -- (\xx+\zz,\xx) -- (0+\zz,0);
\node[gray] at (\xx+.4+\zz,0){$x$};
\node[gray] at (0+\zz,\xx+.4){$y$};

\node (A) at (0+\zz,\dd){};
\node (B) at (\ee-\dd+\zz,\ee){};
\node (C) at (0+\zz, \ee){};

\draw [blue, very thick] (0+\zz,\dd) -- (\ee-\dd+\zz,\ee);
\draw [blue, very thick,dashed] (\ee-\dd+\zz,\ee) --(\ee-\dd+\zz,0) ;
\draw [gray, very thin] (\ee-\dd+\zz,\ee) -- (\xx-\dd+\zz,\xx);

  \foreach \y in {.2,.4,...,1.8, 2.2, 2.4, ..., \xdd}{

  \draw [gray,thin] (\y+\zz,\y+\dd) --(\y+\zz,0);

  }
  
\draw[->] (\xx+\zz, -1.2) -- (0+\zz, -1.2);

\node at (-.4+\zz,\dd){$\nu$};
\node at (\ee-\dd-.7+\zz,\ee){$\nu+\delta$};
\node at (\ee-\dd+\zz,-.2){$\nu+\delta$};
\node at (\ee-\dd+\zz,-.7){$[y:=0]$};
\draw[dashed,gray, very thin] (\xx+\zz-\dd, 0) -- (\xx+\zz-\dd,-1.5);
\draw[dashed,gray, very thin] (\ee+\zz-\dd, -.8) -- (\ee+\zz-\dd,-1.1);
\draw[black] (\xx+\zz-\dd,-1.1) -- (\xx+\zz-\dd,-1.3);
\draw[black] (\ee+\zz-\dd,-1.1) -- (\ee+\zz-\dd,-1.3);
\draw[black] (\xx+\zz,-1.1) -- (\xx+\zz,-1.3);

\node at (\xx+\zz-\dd,-1.6){$\Delta$};
\node at (\ee+\zz-\dd+.1,-1.6){$1-\delta>$};

\end{tikzpicture}
}
\caption{Evolution of $\Delta$ in a one-clock-reset transition, without guards $=0$ or $=1$, from region $\setof{(0,y)}{y\in \intff01}$.}
\label{fig-rry}

\end{figure}

\begin{figure}[t]
\centering
\scalebox{.8}{
\begin{tikzpicture}

\def\xx{5}
\def\dd{2}
\def\xdd{3}
\def\ee{4}
\def\nn{.1}
\def\zz{9}

\draw[gray, very thin] (0, 0) -- (0,\xx)-- (\xx, \xx);
\draw[->,gray, very thin] (\xx, 0) -- (\xx+.2, 0);
\draw[->,gray, very thin] (0, \xx) -- (0,\xx+.2);
\fill[gray!10] (0, 0) -- (\xx,0) -- (\xx,\xx) -- (0,0);
\draw[thick] (0, 0) -- (\xx,0) -- (\xx,\xx) -- (0,0);
\node[gray] at (\xx+.4,0){$x$};
\node[gray] at (0,\xx+.4){$y$};

\draw [blue, very thick] (\dd,0) -- (\ee,\ee-\dd);
\draw [blue, very thick,dashed] (\ee, \ee-\dd) --(\ee, 0);
\draw [gray, very thin] (\ee,\ee-\dd) -- (\xx,\xx-\dd);

  \foreach \y in {.2,.4,...,1.8, 2.2, 2.4, ..., \xdd}{

  \draw [gray,thin] (\y+\dd,\y) --(\y+\dd,0);

  }
  
  \draw[->]  (\xx,-1.2) -- (0,-1.2);

\node at (\dd,-.4){$\nu$};
\node at (\ee,\ee-\dd+.7){$\nu+\delta$};
\node at (\ee, -.3){$\nu+\delta$};
\node at (\ee, -.7){$[y:=0]$};
\draw[dashed,gray, very thin] (\dd,-.6) -- (\dd,-1.1);
\draw[dashed,gray, very thin] (\dd,-.3) -- (\dd,0);
\draw[dashed,gray, very thin] (\ee,-.9) -- (\ee,-1.1);

\draw (\dd, -1.1) -- (\dd, -1.3);
\draw (\ee,-1.1) -- (\ee,-1.3);
\draw (\xx,-1.1) -- (\xx,-1.3);
\node at (\dd, -1.5){$\Delta$};
\node at (\ee, -1.5){$\Delta-\delta$};

\fill[gray!10] (0+\zz, 0) -- (\xx+\zz,0) -- (\xx+\zz,\xx) -- (0+\zz,0);
\draw[gray, very thin] (0+\zz, 0) -- (\zz,\xx)-- (\xx+\zz, \xx);
\draw[->,gray, very thin] (\xx+\zz, 0) -- (\xx+.2+\zz, 0);
\draw[->,gray, very thin] (0+\zz, \xx) -- (0+\zz,\xx+.2);
\draw[thick] (0+\zz, 0) -- (\xx+\zz,0) -- (\xx+\zz,\xx) -- (0+\zz,0);
\node[gray] at (\xx+.4+\zz,0){$x$};
\node[gray] at (0+\zz,\xx+.4){$y$};

\draw [blue, very thick] (\dd+\zz,0) -- (\ee+\zz, \ee-\dd);
\draw [blue, very thick,dashed] (\ee+\zz, \ee-\dd) --(0+\zz, \ee-\dd) ;
\draw [gray, very thin] (\ee+\zz, \ee-\dd) -- (\xx+\zz, \xx-\dd);

  \foreach \y in {.2,.4,...,1.8, 2.2, 2.4, ..., \xdd}{

  \draw [gray,thin] (\y+\zz+\dd,\y) --(0+\zz,\y);

  }
  
\draw[<-] (-1.6+\zz,\xx) -- (-1.6+\zz, 0);

\node at (\dd+\zz,-.4){$\nu$};
\node at (\ee+\zz,\ee-\dd-.7){$\nu+\delta$};
\node at (-.7+\zz,\ee-\dd+.2){$\nu+\delta$};
\node at (-.7+\zz,\ee-\dd-.2){$[x:=0]$};
\draw[dashed,gray, very thin] (\zz,\xx-\dd) -- (-1.5+\zz,\xx-\dd);

\draw[black] (-1.5+\zz,\xx-\dd) -- (-1.7+\zz,\xx-\dd);
\draw[black] (-1.5+\zz,\ee-\dd) -- (-1.7+\zz,\ee-\dd);
\draw[black] (-1.5+\zz,0) -- (-1.7+\zz,0);

\node at (-1.9+\zz,\xx-\dd){$\Delta$};
\node at (-2.3+\zz,\ee-\dd){$\delta<\Delta$};

\end{tikzpicture}
}
\caption{Evolution of $\Delta$ in a one-clock-reset transition, without guards $=0$ or $=1$, from region $\setof{(x,0)}{x\in \intff01}$.}
\label{fig-rrx}
\end{figure}

The following observation motivates this change of variable:
In \cref{fig-rry,fig-rrx}, we consider a transition $t$, from locations $\ell$ to $\ell'$ such that the $\Opt{\ell}{}$ and $\Opt{\ell'}{}$ functions are defined on $\intff01$, with no $=0$ or $=1$ guards.
For a transition $(\ell,\nu) \overset{t,\delta}\rightarrow
(\ell',\nu')$, we then observe that $\Delta(\nu')\in [\Delta(\nu), 1]$ if $\nu(x)=0$ (\cref{fig-rry}), and $\Delta(\nu')\in [0,\Delta(\nu)]$ if $\nu(y)=0$ (\cref{fig-rrx}). Thus the variation of $\Delta$ only depends on the region of $\ell$, not $\ell'$. In other words, it does not matter which clocks are reset by a transition, just the number of clocks that are reset.

Let us now define the induction relation between the $\Opt{}{}$ functions.
Let $\ell$ be a location such that $\ell\not\in G\cup\{i\}$. Moreover, assume that $\ell$ belongs to $\Minsf$ (resp.~ $\Maxsf$ ).
For $k=0$, $\Opt{\ell}{k}=\Delta \mapsto +\infty$.
Consider $\Next(\ell)$ the set of outgoing transition from a location $\ell$ belonging to $\Minsf$ (resp.~ $\Maxsf$ ), with $\ell\neq i$. For each $t:\ell\overset{C,X}{\rightarrow} \ell'$, for any $k\in\Nat$, let us define a function $\Opt{t}{k}$ such that $\Opt{\ell}{k+1}=\min_{t\in\Next(\ell)} \Opt{t}{k}$ (resp.~ $\max$):

\begin{itemize}
\item if $\ell'$ is a goal location, then for all $k\in\Nat$,
	\begin{itemize}
	\item If $\regadh{\ell}=\setof{(0,y)}{y\in \intff01}$,
	then 
	\[\Opt{t}{k}=\Opt{t}{0}: \Delta \mapsto\underset{(\delta,\Delta+\delta)\models C}{\underset{0\leq \delta\leq 1-\Delta} {\inf}}\, \wout(\ell',(\delta,\Delta+\delta)[X:=0])\text{ (resp.~ $\sup$).}\]
	\item If $\regadh{\ell}=\setof{(x,0)}{x\in \intff01}$,
		then \[\Opt{t}{k}=\Opt{t}{0}: \Delta \mapsto\underset{(1-\Delta+\delta,\delta)\models C}{\underset{0\leq \delta\leq \Delta} {\inf}}\, \wout(\ell',(1-\Delta+\delta,\delta)[X:=0])\text{ (resp.~ $\sup$).}\]
	\item If $\regadh{\ell}=\{0,0\}$, 
	then $\Opt{t}{k}(0)=\Opt{t}{0}(0)= \underset{(\delta,\delta)\models C}{\underset{0\leq \delta\leq 1} {\inf}}\, \wout(\ell',(\delta,\delta)[X:=0])$ (resp.~ $\sup$).
	
	\end{itemize}
	
\item if $\ell'$ is not a goal location, then $X\neq\emptyset$ ($t$ resets one or two clocks).
	\begin{itemize}
	\item If $X=\{x,y\}$ then, since $\Gcal$ is almost trimmed, every valuation in $\regadh{\ell}$ can reach $(\ell',(0,0))$. Therefore $\Opt{t}{k}= \Delta \mapsto \Opt{\ell'}{k}(0)$.
	\item Otherwise, $\Opt{\ell'}{k}$ is defined on $\intff01$. Then:
		\begin{itemize}
		\item If $\regadh{\ell}=\setof{(0,y)}{y\in \intff01}$,
			\begin{itemize}
			\item If $(x=0)\in C$ (thus $X=\{x\}$) or $(y=1)\in C$ (thus $X=\{y\}$),\footnote{Note that since $\Gcal$ is almost trimmed, both guards cannot be in $C$ at the same time. Then $\Opt{t}{k}= \Delta \mapsto \Opt{\ell'}{k}(\Delta)$.} 
			then this forces a delay such that $t$ preserves $\Delta$. 
			\item Otherwise, as observed in \cref{fig-rry},
			\[\Opt{t}{k}: \Delta \mapsto\underset{\Delta\leq \Delta'\leq 1}{\inf}\, \Opt{\ell'}{k}(\Delta')\text{ (resp.~ $\sup$).}\]
			\end{itemize}
		\item Symetrically, if $\regadh{\ell}=\setof{(x,0)}{x\in \intff01}$,
		\begin{itemize}
			\item If $(y=0)\in C$ or $(x=1)\in C$, then $\Opt{t}{k}= \Delta \mapsto \Opt{\ell'}{k}(\Delta)$.
			\item Otherwise, as observed in \cref{fig-rrx},
			\[\Opt{t}{k}: \Delta \mapsto\underset{0\leq \Delta'\leq \Delta}{\inf}\, \Opt{\ell'}{k}(\Delta')\text{ (resp.~ $\sup$).}\]
			\end{itemize}
		\item If $\regadh{\ell}=\{(0,0)\}$,
		$\Opt{t}{k}(0)= \underset{0\leq \Delta\leq 1}\inf \Opt{\ell'}{k}(\Delta)$ (resp.~ $\sup$).
		\end{itemize}
	\end{itemize}

\end{itemize}

%

%

We call the functions $\Opt{t}{0}$, for all transitions $t$ to a goal location, the \textbf{projected output functions} of $\Gcal$.  
The projected output functions serve to initialize the value iteration algorithm on the $\Opt{}{}$ functions, as do $\wout(\ell,\cdot)$ functions for the value iteration algorithm on the $\opt{}{}$ functions. Observe that, since $\wout(\ell,\cdot)$ is piecewise linear and continuous for any $\ell\in G$, the projected output functions are continuous piecewise-linear functions.

\begin{figure}[ht]
\includegraphics[scale=.7]{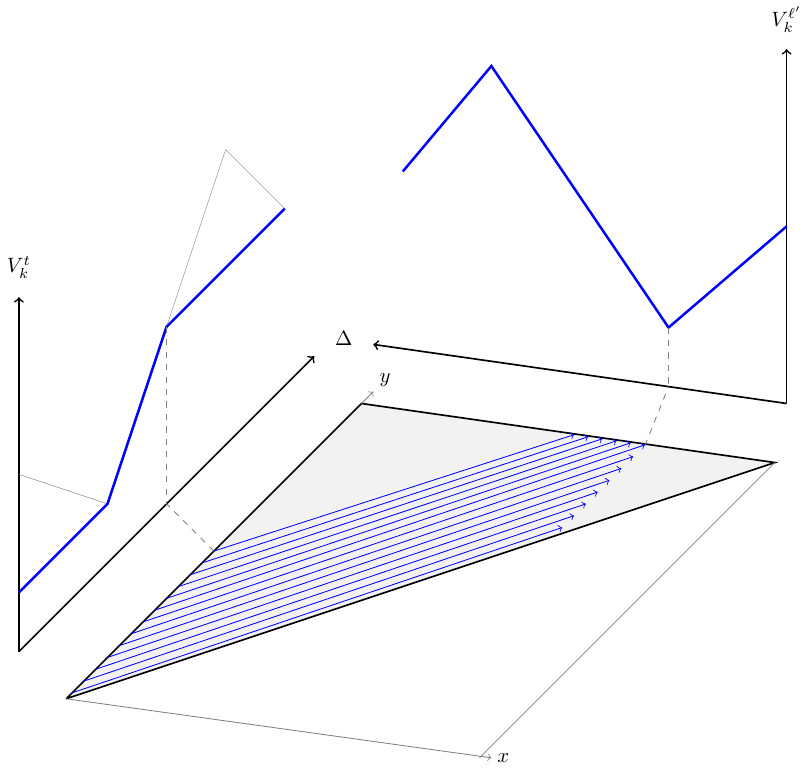}
\caption{$\Opt{}{k}$ induction relation, with $t$ a transition from a $\Minsf$ location $\ell$ where $x\leq y$. $t$ has no $=0$ or $=1$ guards, and applies $y:=0$.}
\label{fig-slash}
\end{figure}

Thus the functions $\Opt{\ell}{k}$ are by construction continuous piecewise-linear functions. Furthermore,
as can be seen in \cref{fig-slash}, $\Opt{t}{k}$ is obtained from $\Opt{\ell'}{k}$ by replacing $\Opt{\ell'}{k}$ on a finite number of intervals by constant functions $\Delta\mapsto c$ (while preserving continuity) where the constants $c$ are taken among local extremums of $\Opt{\ell'}{k}$.

Hence every linear piece of a function $\Opt{\ell}{k}$ is:
\begin{itemize}
\item either equal to some projected output function,
\item or of slope $0$, equal to some $z\in \mathbb{R}^+$, where $z$ is a local extremum of some function $\Opt{\ell}{k'}$ for $k'<k$. Hence, by induction, $z$ is either a local extremum of some projected output function, or a local extremum of the minimum or maximum of two projected output functions.
\end{itemize}

There is a finite number of such pieces, hence only a finite number of way they can be assembled to make a \emph{continuous} piecewise linear function on $\intff01$.
Thus there is a finite number of functions of this form.

Since the $\opt{}{}$ and $\Opt{}{}$ functions  decrease at each iteration ($\Opt{\ell}{k+1}(\Delta)\leq\Opt{\ell}{k}(\Delta)$), then the value iteration algorithm on the $\Opt{}{}$ functions terminates.

Furthermore, since no transition enters $i$, $\opt{i}{k}$ does not affect $\opt{\ell}{k+1}$ for any location $\ell$. Therefore, the value iteration algorithm on the $\opt{}{}$ functions aside from $\opt{i}{}$  stabilizes. Thus, if it terminates in $k$ steps for every location except $i$, then, adding $\opt{i}{}$, the value iteration algorithm terminates in at most $k+1$ steps.
\end{proof}

Note that all transformations in \cref{sec-simpl} only serve to make the termination argument of \cref{lem-linear-kernel} more visible. However, the value iteration algorithm on two-clock kernel \wtgs terminates even without these simplifications. Indeed, termination in $k$ steps entails that $\Minsf$ needs only to consider strategies that access a goal location in at most $k$ steps. Since the transformations described in \cref{sec-all-simpl} do not make arbitrarily long paths equivalent to one shorter path, then termination of the value algorithm on the transformed kernel \wtgs immediately implies termination on the original kernel \wtgs. This in turn entails that the value iteration algorithm terminates on the semi-unfolding $\tilde{\Gcal}$, thus on any two-clock \wtg with non-negative weight.

\clearpage

\bibliography{../../BIBLIO/biblio} 

\clearpage

\appendix

\section{Proofs of Section \ref{sec-simpl}}
\subsection{Proof of Lemma \ref{thm-sim-spicy}}
\label{sec-app}

\begin{definition}[Simulation and bisimulation]

Let $\mathcal{G}=(L_\Minsf,L_\Maxsf,G,\mathcal{X},T,w)$ and $\mathcal{G}'=(L'_\Minsf,L'_\Maxsf,G',\mathcal{X},T',w')$ be \wtgs. Then a relation $R\subseteq (L_\Minsf\times \Rbb^\Xcal\times L'_\Minsf\times \Rbb^\Xcal) \cup (L_\Maxsf\times \Rbb^\Xcal\times L'_\Maxsf\times \Rbb^\Xcal)$ is a \textbf{simulation relation} when
\[(\ell,\nu)\,R\,(\ell',\nu') \implies \begin{cases}
\medskip
\text{$\ell\in G$ and $\ell'\in G'$,}\\
\text{or, for all $t\in T$, $\delta\in\Rbb$, there exist $t'$ and $\delta'\in\Rbb$ such that if }\\
 \text{$(\ell,\nu)\xrightarrow{\delta,t}(l_2,\nu_2)$ then $(\ell',\nu')\xrightarrow{\delta',t'}(\ell'_2,\nu'_2)$ where $(\ell_2,\nu_2)\,R\,(\ell'_2,\nu'_2)$.}
\end{cases}\]


$R$ is a \textbf{bisimulation} if $R$ and its converse are both simulations.
\end{definition}

\begin{example}
 The region relation $R$ between configurations of a same \wtg, where $(l_1,\nu_1)R(l_2,\nu_2)$ iff $l_1=l_2$ and $\nu_1,\nu_2$ are in the same region, is a bisimulation.
\end{example}

A (bi)simulation relation $R$ can be extended to runs:
$\rho_1 \,R\, \rho_2$ when $|\rho_1|=|\rho_2|$, and  $\rho_1^\Ccal(n) \,R\, \rho_2^\Ccal(n)$ for all $n<|\rho_i|$.

\begin{lemma}
\label{lem-bisim-WTG}
Let $\Gcal_1$ and $\Gcal_2$ be two \wtgs. 
Let $R$ be a bisimulation relation between $\Gcal_1$ and $\Gcal_2$, such that $\Gcal_1$ and $\Gcal_2$ start from configurations $c_1 \,R\, c_2$. Assume that $\Value_{\Gcal_i}\neq +\infty$ for $i\in \{1,2\}$. 
Then 

$$|\Value_{\Gcal_1}-\Value_{\Gcal_2}|\leq \sup
\enstq{|\weight(\rho_1)-\weight(\rho_2)|}{\begin{array}{c}
		\rho_1,\rho_2 \text{ plays of } \Gcal_1,\Gcal_2\\ 
		\rho_1 \,R\, \rho_2\\
		\end{array}}
\;.$$

\end{lemma}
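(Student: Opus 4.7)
The plan is to use the bisimulation $R$ to transfer strategies between $\Gcal_1$ and $\Gcal_2$ in such a way that the two resulting plays are $R$-related, at which point the weight bound follows directly from the definition of $M := \sup\{|\weight(\rho_1)-\weight(\rho_2)| : \rho_1\,R\,\rho_2\}$. By the symmetry of the bisimulation, it suffices to prove $\Value_{\Gcal_2}\le\Value_{\Gcal_1}+M$; swapping the roles of $\Gcal_1$ and $\Gcal_2$ then gives the reverse inequality.

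Fix $\varepsilon>0$ and pick an $\varepsilon$-optimal strategy $\sigma^1_{\Minsf}$ for $\Minsf$ in $\Gcal_1$. I would construct a strategy $\sigma^2_{\Minsf}$ in $\Gcal_2$ together with a partial map $\phi$ from finite runs of $\Gcal_2$ starting at $c_2$ and consistent with $\sigma^2_{\Minsf}$ to finite runs of $\Gcal_1$ starting at $c_1$, ensuring that $\phi(\rho_2)\,R\,\rho_2$ at all times. The construction proceeds by induction on the length of the run, starting from $\phi(c_2)=c_1$. Whenever $\phi(\rho_2)$ is defined and the last location of $\rho_2$ belongs to $\Minsf$, we read off the move $\sigma^1_{\Minsf}(\phi(\rho_2))$ in $\Gcal_1$, use the forward direction of the bisimulation to pick a matching move $(\delta_2,t_2)$ in $\Gcal_2$, set $\sigma^2_{\Minsf}(\rho_2)=(\delta_2,t_2)$, and extend $\phi$ by the paired moves. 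When the last location belongs to $\Maxsf$, then for \emph{every} possible extension $\rho_2'$ of $\rho_2$ we use the converse simulation to extend $\phi(\rho_2)$ by a matching move in $\Gcal_1$. The first clause of the bisimulation definition guarantees that goal locations are paired with goal locations, so finite plays match finite plays and infinite plays match infinite plays (so the weight inequality is never compared with $\infty-\infty$).

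Now fix any $\Maxsf$-strategy $\sigma^2_{\Maxsf}$ in $\Gcal_2$. Applying $\phi$ along the play $\rho_2 := \mathsf{play}_{\Gcal_2}(c_2,\sigma^2_{\Minsf},\sigma^2_{\Maxsf})$ produces an $R$-related play $\rho_1$ of $\Gcal_1$ whose Max-moves implicitly define a strategy $\sigma^1_{\Maxsf}$ satisfying $\rho_1=\mathsf{play}_{\Gcal_1}(c_1,\sigma^1_{\Minsf},\sigma^1_{\Maxsf})$. By the definition of $M$,
\[
\weight(\rho_2) \le \weight(\rho_1) + M \le \Value_{\Gcal_1} + \varepsilon + M,
\]
where the second inequality uses $\varepsilon$-optimality of $\sigma^1_{\Minsf}$. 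Taking the supremum over $\sigma^2_{\Maxsf}$ bounds the worst-case value against $\sigma^2_{\Minsf}$, hence $\Value_{\Gcal_2}\le\Value_{\Gcal_1}+\varepsilon+M$; letting $\varepsilon\to 0$ concludes.

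The main obstacle is getting the interleaving of the inductive construction right: $\sigma^2_{\Minsf}$ must be defined \emph{before} we know $\sigma^2_{\Maxsf}$, which is why $\phi$ is built on all runs consistent with $\sigma^2_{\Minsf}$ rather than only along one specific play. The bisimulation witness at each step is not unique, so a choice function is needed, but since \wtg strategies are arbitrary functions of the finite history, this causes no structural difficulty. A minor care point is that strategies may only achieve the value in the limit, so the $\varepsilon$-argument is mandatory and relies crucially on the assumption $\Value_{\Gcal_i}\neq +\infty$.
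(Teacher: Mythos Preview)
Your proposal is correct and follows essentially the same bisimulation-transfer argument as the paper: copy $\Minsf$'s strategy across the bisimulation and observe that any response by $\Maxsf$ can be mirrored back, so the resulting plays are $R$-related and the weight bound applies. The only presentational difference is that the paper fixes the pair $(\sigma_1^{\Minsf},\sigma_2^{\Maxsf})$ simultaneously (allowing the simulated $\sigma_2^{\Minsf}$ to depend on $\sigma_2^{\Maxsf}$) and then appeals to the $\sup\inf\le\inf\sup$ swap, whereas you build $\sigma_{\Minsf}^2$ on \emph{all} $\sigma_{\Minsf}^2$-consistent runs via your map $\phi$ and use an $\varepsilon$-optimal strategy instead; your version is slightly cleaner in that the constructed $\Minsf$-strategy is genuinely independent of the opponent, but the underlying idea is identical.
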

\begin{proof}
For all $\sigma_1^\Mimin$ a strategy for Min on $\Gcal_1$ and $\sigma_2^\Mamax$ a strategy for Max on $\Gcal_2$, there exists  $\sigma_2^\Mimin$ and $\sigma_1^\Mamax$ strategies for Min and Max on $\Gcal_2$ and $\Gcal_1$ respectively, such that $\rho_1 \,R\, \rho_2$ for $\rho_i=\play(c_i,\sigma_i^\Mimin,\sigma_i^\Mamax)$.
(In any Min location, $\sigma_2^\Mimin$ simulates $\sigma_1^\Mimin$ following $R$. Similarly in any Max location, $\sigma_1^\Mamax$ simulates $\sigma_2^\Mamax$ following $R$.)

Note that $\sigma_1^\Mamax, \sigma_2^\Mimin$ and $\rho_1,\rho_2$ are functions of $\sigma_1^\Mimin, \sigma_2^\Mamax$. However, we omit these arguments to lighten notations.

Since the set of strategies obtained by such a simulation is included in the set of all strategies,
\begin{align*}
\sup_{\sigma_2^\Mamax} \weight(\rho_1) \leq \sup_{\sigma^\Mamax} V_{\Gcal_1}(\sigma_1^\Mimin, \sigma^\Mamax)\text{ for a fixed $\sigma_1^\Minsf$,}\\
\shortintertext{and}
\inf_{\sigma_1^\Mimin} \weight(\rho_2) \geq \inf_{\sigma^\Mimin} V_{\Gcal_2}(\sigma^\Mimin, \sigma_2^\Mamax)\text{ for a fixed $\sigma_2^\Maxsf$.}
\end{align*}

Therefore \[\inf_{\sigma_1^\Mimin} 
\sup_{\sigma_2^\Mamax}
\weight(\rho_1)\leq 
\inf_{\sigma^\Mimin}\sup_{\sigma^\Mamax} \weight(\play_{\Gcal_1}(c_1,\sigma^\Mimin, \sigma^\Mamax))= \Value_{\Gcal_1}\]
and
\begin{align*}
\inf_{\sigma_1^\Mimin}
\sup_{\sigma_2^\Mamax} 
\weight(\rho_2)
&\geq
\sup_{\sigma_2^\Mamax} 
\inf_{\sigma_1^\Mimin}
\weight(\rho_2)\\
&\geq 
\sup_{\sigma^\Mamax}\inf_{\sigma^\Mimin} \weight(\play_{\Gcal_2}(c_2,\sigma^\Mimin, \sigma^\Mamax))= \Value_{\Gcal_2}
\end{align*}

Therefore $\Value_{\Gcal_2}-\Value_{\Gcal_1}\leq 
\sup\setof{\weight(\rho_2)-\weight(\rho_1)}{\rho_1 \,R\, \rho_2}\;.$

%

Reasoning in mirror, one can obtain
\[\Value_{\Gcal_1}-\Value_{\Gcal_2}\leq 
\sup\setof{\weight(\rho_1)-\weight(\rho_2)}{\rho_1 \,R\, \rho_2}\;.\]

%
and combine to conclude.

%
%
\end{proof}

Now let us define the following relation between valuations:
For $\epsilon>0$, two valuations $\nu,\nu'$ are \textbf{$\epsilon$-neighbours} if there exists $\epsilon_1,\epsilon_2\geq 0$ such that $\epsilon_1+\epsilon_2<\epsilon$, and for any clock $x\in\Xcal$, $\nu(x)-\nu'(x)\in[-\epsilon_1,\epsilon_2]$.

Let $\Gcal_\prec=(L_\Minsf^\prec, L_\Maxsf^\prec,G,\Xcal,T^\prec, \wout^\prec)$ be a trimmed region weighted timed game. We consider $\Gcal_\preceq = (L_\Minsf^\preceq, L_\Maxsf^\preceq,G,\Xcal,T^\preceq,\wout^\preceq)$ a copy of $\Gcal_\prec$ where every strict guard has been relaxed into a strict-or-equal guard.

\begin{lemma}
\label{lem-bisim-spicy-all}

For any configurations $(\ell_\prec,\nu_\prec)$ in $\Gcal_\prec$, 
and $(\ell_\preceq,\nu_\preceq)$ in $\Gcal_\preceq$, 

Let $(\ell_\prec,\nu_\prec) \,R_\epsilon\, (\ell_\preceq,\nu_\preceq)$ \tiff 
\begin{itemize}
\item $\ell_\prec=\ell_\preceq$ 
\item $\nu_\prec\sqsubset \reg(\ell_\prec)$ and $\nu_\preceq\sqsubset \overline{\reg(\ell_\prec)}$
\item $\nu_\prec$ and $\nu_\preceq$ are $\epsilon$-neighbours
\end{itemize}
Then $R_\epsilon$ is a bisimulation relation.
\end{lemma}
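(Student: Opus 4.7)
The plan is to establish the two symmetric simulation conditions constituting the bisimulation $R_\epsilon$. Since $\Gcal_\prec$ and $\Gcal_\preceq$ share the same locations, discrete transitions (up to guard strictness), weights, and player partition, the goal-location clause is automatic: if $\ell_\prec = \ell_\preceq \in G$, both configurations are terminal. The substantive content is showing that each delayed transition in one game can be matched by one in the other, ending in a pair of configurations still related by $R_\epsilon$.

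For the forward direction ($\Gcal_\prec \to \Gcal_\preceq$), given $(\ell, \nu_\prec) \xrightarrow{\delta, t} (\ell', \nu_\prec')$ in $\Gcal_\prec$ with $t : \ell \xrightarrow{C, X} \ell'$, I would use the same discrete transition $t$ in $\Gcal_\preceq$ and choose the delay $\delta'$ by case analysis on $C$, guided by \cref{lem-trimmed}. If $(x = 0) \in C$, then $x \in X_0$ forces $\delta = 0$; since $X_0 \subseteq Y_0$ in the adherence, $\nu_\preceq(x) = 0$ as well and $\delta' = 0$ satisfies the (unrelaxed) equality guard. If $(x = 1) \in C$, then $x \in X_p$ and $\delta = 1 - \nu_\prec(x)$, so I take $\delta' = 1 - \nu_\preceq(x)$. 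If neither equality appears in $C$, I set $\delta' = \delta$: the relaxed guards $x \geq 0$ and $x \leq 1$ are then satisfied from $\epsilon$-closeness of $\nu_\preceq$ to $\nu_\prec$ together with region membership. The reverse direction ($\Gcal_\preceq \to \Gcal_\prec$) is symmetric: since $\nu_\prec$ already lies strictly inside $\reg(\ell)$ and $\Gcal_\prec$ is trimmed (so any valuation in $\reg(\ell)$ can reach the guard region of $t$ by some valid delay), a valid $\delta$ for the strict guards can always be produced by the same case analysis, possibly perturbing $\delta'$ infinitesimally to restore strict inequalities.

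The main technical obstacle is verifying that $R_\epsilon$ is preserved through the transition, especially when $\delta \neq \delta'$ (which only arises in the $x = 1$ case). Writing $d_z = \nu_\prec(z) - \nu_\preceq(z)$ for each clock $z$, the hypothesis that $\nu_\prec$ and $\nu_\preceq$ are $\epsilon$-neighbours gives $d_z \in [-\epsilon_1, \epsilon_2]$ with $\epsilon_1 + \epsilon_2 < \epsilon$, so the spread $\max_z d_z - \min_z d_z$ is strictly below $\epsilon$. After the delay, the difference on an unreset clock $z$ becomes $d_z + (\delta - \delta') = d_z - d_x$, a uniform shift that preserves the spread. Reset clocks contribute difference $0$, which lies in the shifted range since $d_x - d_x = 0$ already appears among the shifted values (when $x$ is unreset) or among the reset ones (otherwise). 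Hence all post-transition differences fit in an interval of width at most $\epsilon_1 + \epsilon_2 < \epsilon$, and fresh $\epsilon_1', \epsilon_2'$ can be chosen to witness the $\epsilon$-neighbour property on the target configurations. Region membership of $\nu_\preceq'$ in $\regadh{\ell'}$ follows from $\Gcal_\prec$ being a region WTG and $\Gcal_\preceq$ being the corresponding relaxed region WTG with the same region assignment, so $\overline{r}[X := 0] = \regadh{\ell'}$ whenever $r[X := 0] = \reg(\ell')$.
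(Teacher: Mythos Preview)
Your treatment of the $x=1$ case via the ``spread'' argument is correct and matches the paper's explicit computation with $\epsilon_1',\epsilon_2'$. The forward direction $\Gcal_\prec\to\Gcal_\preceq$ is, as the paper notes, the easier one, and your case split there is essentially right.

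The gap is in the reverse direction $\Gcal_\preceq\to\Gcal_\prec$, which you dismiss as ``symmetric'' with an ``infinitesimal perturbation''. The paper treats this as the substantive direction, and the perturbation is \emph{not} free: it consumes part of the $\epsilon$-budget. Concretely, suppose $(x>0)\in C$ and the $\Gcal_\preceq$-move has $\delta'=0$ with $\nu_\preceq(x)=0$; since $\Gcal_\prec$ is trimmed, $x\in X_0$ and hence $\nu_\prec(x)=0$ too, so \emph{any} valid $\delta$ must be strictly positive. Picking $\delta>0$ shifts every clock difference by $\delta$, and to keep the resulting pair an $\epsilon$-neighbour you need $\delta<\epsilon-(\epsilon_1+\epsilon_2)$. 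This is exactly where the strict inequality $\epsilon_1+\epsilon_2<\epsilon$ in the definition of $\epsilon$-neighbour is used; with a non-strict definition the argument would fail. The same issue arises symmetrically for a guard $(x<1)\in C$ when $\nu_\preceq(x)+\delta'=1$: you must choose $\delta$ with $\nu_\prec(x)+\delta<1$, and the paper's analysis shows how to do this while keeping the new $\epsilon_1'+\epsilon_2'<\epsilon$. Your spread argument handles the $x=1$ case cleanly because there the shift equals $-d_x$, which lies in $[-\epsilon_2,\epsilon_1]$ and hence rebalances without increasing $\epsilon_1+\epsilon_2$; but for the strict-guard cases the shift is an \emph{additional} perturbation that genuinely enlarges the interval, and you need to bound it by the available slack.

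A secondary point: in the forward direction's ``neither equality'' case, $\delta'=\delta$ can fail the relaxed guard $(x\leq 1)$ when $\nu_\preceq(x)>\nu_\prec(x)$, so even there a small adjustment (bounded by the same slack) may be needed. This is presumably why the paper calls that direction merely ``easier'' rather than trivial.
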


\begin{proof}
First observe that one side of the bisimulation is easier than the other one, since transitions in $\Gcal_\preceq$ are more permissive.
Therefore we will detail only the other side:

Let $t:\ell_1\xrightarrow{C,X} \ell_2$. Let $\overline{C}$ be the relaxation of $C$. Let $\nu_1$ be a valuation belonging to $\reg(\ell_1)$ in $\Gcal_\prec$ and let $\nu_1'$ be an $\epsilon$-neighbour of $\nu_1$. By definition, there are $\epsilon_1,\epsilon_2\geq 0$ such that $\epsilon_1+\epsilon_2<\epsilon$ and for any clock $x\in\Xcal$, $\nu_1(x)-\nu_1'(x)\in[-\epsilon_1,\epsilon_2]$.

Let $\delta'\geq 0$ such that $\nu_1'+\delta' \models \overline{C}$ and let $\nu_2'=(\nu_1'+\delta')[X:=0]$. Then let us show that there exists $\delta\geq 0$ such that $\nu_1 + \delta \models C$,
and
$(\ell_2,\nu_2)R_\epsilon (\ell'_2,\nu'_2)$ with $\nu_2=(\nu_1+\delta)[X:=0]$. To prove that, one only needs to show that $\nu'_1+\delta'$ is in the adherence of the region of $\nu_1+\delta$, and that $\nu_1+\delta$ and $\nu'_1+ \delta'$ are $\epsilon$-neighbours.

Consider the interval
\[\Delta=\enstq{\delta\geq0}{\begin{array}{c}
		\nu_1+\delta\models C\\ 
		\nu_1+\delta\sqsubset r\text{ s.t. }\nu_1'+\delta' \sqsubset \overline{r}\\
\end{array}}\]
Since $\Gcal_\prec$ is region trimmed, there exists $\delta$ such that $\nu_1+\delta\models C$, and every valuation of the form $\nu+\delta$ that satisfy $C$ with $\nu\sqsubset \reg(\ell_1)$ belong to a unique region $r_t$. Since $\nu'_1\sqsubset \regadh{\ell_1}$ and $\nu'_1 + \delta' \models \overline{C}$, $\nu'_1 + \delta' \sqsubset \overline{r_t}$.
Therefore, $\Delta$ is not empty.

If $\delta' \in \Delta$, then $(\nu_1+\delta')[X:=0]$ and $\nu_1'+\delta'$ are $\epsilon$-neighbours.
Otherwise, when $\delta'\not\in \Delta$, it entails that $\Delta$ is constrained by some guards in $C$. These guards can be of the form $x=1$, $x>0$, $x<1$ or $x\leq 1$ for any $x\in\Xcal$
\footnote{There cannot be any guard of the form $x=0$ in $C$, otherwise $\Delta=\{\delta'\}=\{0\}$. 
}
.

		\begin{itemize}
			\item If there is a guard $x=1$ in $C$ for some $x\in\Xcal$, then $\Delta$ is the singleton $\{1-\nu_1(x)\}$. Indeed $\Delta$ is not empty and $\delta\eqdef 1-\nu_1(x)$ is the only delay that can satisfy the guard $x=1$. Note that, for the same reason, $\delta'=1-\nu'_1(x)$. 
 By definition of $\Delta$, $\nu_2=(\nu_1+\delta)$ and $\nu_2'$ are in the same region. Moreover, for all $y\in\Xcal$, if $y\in X$, then $\nu_2(y)=\nu_2'(y)=0$ and if $y\in\Xcal\setminus X$, then 
			$
			(\nu_1(y) + \delta)-(\nu_1'(y) + \delta' \in\intff{-(\epsilon_1+\delta'-\delta)}{\epsilon_2-(\delta'-\delta)}
			$.
			Now, $\delta'-\delta=\nu_1(x)-\nu_1'(x)\in\intff{-\epsilon_1}{\epsilon_2}$ 
			entails that $\epsilon_2'\eqdef \epsilon_2-(\delta'-\delta)\geq 0$, and $\epsilon_1'\eqdef \epsilon_1+(\delta'-\delta)\geq 0$. Finally, $\epsilon_1'+\epsilon_2'=\epsilon_1+\epsilon_2<\epsilon$.
			\end{itemize}
Now assume that there is no such guards in $C$:
			\begin{itemize}
		\item If there is a guard $x>0$ in $C$ for some $x\in\Xcal$
		such that $\nu_1+\delta'\not\models (x>0)$ but $\nu_1' + \delta'\models (x>0)$ then $\nu_1'(x)+\delta'=0$.
		Furthermore, since $\Gcal_\prec$ is trimmed, $\nu_1(x)=0$. 
Therefore, $\Delta$ is an interval of the form $]0,\dots]$.
Pick $\delta\in \Delta$ such that $0<\delta<\epsilon-(\epsilon_1+\epsilon_2)$.
Let
$\epsilon_2'\eqdef \epsilon_2-(\delta'-\delta)=\epsilon_2+\delta$, and $\epsilon_1'\eqdef \min\{0,\epsilon_1+(\delta'-\delta)\}\geq0$. Then $\epsilon_1'+\epsilon_2'\leq \epsilon_1+\epsilon_2 + \delta <\epsilon$, hence $\nu_1+\delta$ and $\nu_1'+\delta'$ are $\epsilon$-neighbourss.	
\item If there is a guard $x< 1$ in $C$ (resp. $x\leq 1$) for some $x\in\Xcal$,
such that $\nu_1'+\delta'\models (x\leq 1)$ but $\nu_1+\delta' \not
\models (x< 1)$ (resp. $x\leq 1$), then $\Delta<\delta'$.
If $\delta' + \nu_1'(x) - \nu_1(x)\in \Delta$ then pick $\delta= \delta' + \nu_1'(x) - \nu_1(x)$.
			Since $\delta'-\delta=\nu_1(x)-\nu_1'(x)\in\intff{-\epsilon_1}{\epsilon_2}$, $\nu_1+\delta$ and $\nu_1'+\delta'$ are $\epsilon$-neighbours.

Otherwise 
pick $\delta\in\Delta$ such that
$ 1 -\epsilon + (\epsilon_1+\epsilon_2) <\nu_1(x)+\delta < 1$.
Then $\nu_1(x)- \nu_1'(x)<\delta'-\delta< \nu_1(x)- \nu_1'(x) + \epsilon-(\epsilon_1+\epsilon_2)$, with $\nu_1(x)- \nu_1'(x)
\in \intff{-\epsilon_1}{\epsilon_2}$.
Let
$\epsilon_2'\eqdef \min(0,\epsilon_2-(\delta'-\delta))\geq 0$ and $\epsilon_1'\eqdef \epsilon_1+(\delta'-\delta)\geq0$. Then $\epsilon_1'+\epsilon_2'<\epsilon$, hence $\nu_1' + \delta'$ is an $\epsilon$-neighbour of $\nu_1+\delta$.\qedhere
			
\end{itemize}

\end{proof}

We are now ready to prove \cref{thm-sim-spicy}.

\thmspicy*
\begin{proof}

$\Gcal_\prec$ and $\Gcal_\preceq$ start from the same configuration, thus their first configurations are in bisimulation $R_\epsilon$ for any $\epsilon>0$.

Therefore, according to \cref{lem-bisim-WTG},
\[|\Value_{\Gcal_\prec}-\Value_{\Gcal_\preceq}|\leq \sup
\enstq{|\weight(\rho_\prec)-\weight(\rho_\preceq)|}{\begin{array}{c}
		\rho_\prec,\rho_\preceq \text{ plays of } \Gcal_\prec,\Gcal_\preceq\\ 
		\rho_\prec \,R_\epsilon\, \rho_\preceq\\
		\end{array}}
\;.\]

In a kernel, for any finite run $\rho$ of length $n$, 
$\weight(\rho)=\wout(\rho^C(n))$. Let $s$ be the maximal slope (in absolute value, in one variable) in functions $\wout(\ell,\cdot)$. Then 

\begin{align*}
\sup&\setof{|\weight(\rho_1)-\weight(\rho_2)|}{\rho_1 \,R_\epsilon\, \rho_2}\\ 
&\leq \sup \setof{|\wout(\ell,\nu_1)-\wout(\ell,\nu_2)|}{\ell \in G, 
(\ell,\nu_1)\,R_\epsilon\, (\ell,\nu_2)}\\
&\leq \epsilon\cdot s \cdot |\Xcal|\qedhere
\end{align*}
Conclude with \cref{lem-bisim-WTG}.

\end{proof}

\subsection{Proof of Lemma \ref{thm-all-reset}}
\label{sec-proof-all-reset}

\begin{lemma}
\label{lem-no-mixed-gadgets}
For $A,B=\Minsf,\Maxsf$ or $\Maxsf,\Minsf$.
In a relaxed region kernel $\intff01$-\wtg, consider a transition $t:\ell_A \overset{C,X}{\rightarrow} \ell_B$ between $\ell_A$ a location belonging to Player $A$, and $\ell_B$ a location belonging to Player $B$, such that $C$ has no guard of the form $x=0,x=1,x<1$, and $X=\emptyset$. Then adding the guard $x=1$ to $C$ for some $x\in X^\uparrow_{\ell_A}$ does not change the value. 
\end{lemma}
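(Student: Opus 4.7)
My plan is to prove the lemma by value iteration, showing that the iterates $\opt{\ell}{k}$ coincide in the original game $\Gcal_1$ and the game $\Gcal_2$ obtained by adding $x=1$ to $t$'s guard; passing to the limit then gives equality of the values. The only structural difference between $\Gcal_1$ and $\Gcal_2$ is that, for transition $t$, $\Gcal_1$ allows any delay $\delta \in [0, 1-\nu(x)]$ (using the absence of $x=0$, $x=1$, $x<1$ in $C$ together with the relaxation of strict guards), whereas $\Gcal_2$ forces $\delta = 1-\nu(x)$; consequently the only location whose iteration step could differ is $\ell_A$, and only through the $t$-contribution.

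The technical heart of the proof is a monotonicity property for $\opt{\ell_B}{k}$. Because $t$ performs no reset and $\Gcal$ is a kernel (so location weights vanish), Player $B$, upon reaching $\ell_B$ with valuation $\nu+\delta$, can delay by $(1-\nu(x))-\delta$ costlessly to reach $\nu+(1-\nu(x))$, and then take any transition enabled there. Conversely, at $(\ell_B, \nu+(1-\nu(x)))$ the constraint $x=1$ forces the initial delay to $0$, so $\opt{\ell_B}{k}(\nu+(1-\nu(x)))$ is a $\sup$ or $\inf$ over exactly those transitions enabled at $\nu+(1-\nu(x))$. Since each such option reappears at $(\ell_B, \nu+\delta)$, we obtain
\[
  \opt{\ell_B}{k}(\nu+\delta) \;\geq\; \opt{\ell_B}{k}(\nu+(1-\nu(x))) \quad \text{when } B = \Maxsf,
\]
with the reverse inequality when $B = \Minsf$, uniformly in $k$.

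With this in hand the induction is straightforward. The base case $k=0$ is immediate. In the inductive step, locations $\ell \neq \ell_A$ have identical outgoing transitions in $\Gcal_1$ and $\Gcal_2$, so the inductive hypothesis gives equality. At $\ell_A$, the $t$-contribution in $\Gcal_1$ is an $\inf$ (resp.\ $\sup$) of $\opt{\ell_B}{k}(\nu+\delta)$ over $\delta \in [0, 1-\nu(x)]$; by the monotonicity above it is realised at $\delta = 1-\nu(x)$ and thus equals the unique $\Gcal_2$-option $\opt{\ell_B}{k}(\nu+(1-\nu(x)))$. All other outgoing transitions of $\ell_A$ being unchanged, $\opt{\ell_A}{k+1}^{\Gcal_1} = \opt{\ell_A}{k+1}^{\Gcal_2}$, closing the induction. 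Pointwise convergence of value iteration then yields $\Value_{\Gcal_1} = \Value_{\Gcal_2}$.

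The main obstacle is establishing the monotonicity cleanly: one must check that the delay $(1-\nu(x))-\delta$ really keeps the configuration inside $\regadh{\ell_B}$, and that every transition enabled at $\nu+(1-\nu(x))$ is indeed usable after this delay. This is precisely what the hypotheses buy us — $t$ has no reset (so clocks propagate unchanged from $\ell_A$ to $\ell_B$), $C$ contains none of $x=0$, $x=1$, $x<1$ (so the delay interval is a genuine closed interval up to $1-\nu(x)$), $x$ is maximal in $\reg(\ell_A)$ (so $\nu+\delta''$ stays in $[0,1]^\Xcal$), and strict guards are already relaxed (so the boundary valuation $\nu+(1-\nu(x))$ does not accidentally disable any outgoing transition of $\ell_B$).
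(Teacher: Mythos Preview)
Your proof is correct and rests on the same observation as the paper's: since location weights vanish in a kernel game, Player $B$ at $\ell_B$ can costlessly absorb any delay that Player $A$ declined to take along $t$, so $A$'s choice of delay is strategically irrelevant and may as well be pinned to the maximal value $1-\nu(x)$. The paper expresses this in two sentences of strategic reasoning (``$A$ offers $B$ more options by delaying less; $B$ can take the extra delay from $\ell_B$ at cost~$0$''); you formalize the same idea as a monotonicity property of $\opt{\ell_B}{k}$ and thread it through a value-iteration induction. The substance is identical, your version is simply more explicit.

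One small remark on the final step: you invoke pointwise convergence of $\opt{}{k}$ to $\Value$, but the paper only records that \emph{if} the iterates stabilise they equal the value (Section~5). Convergence in the limit is standard for non-negative-weight reachability \wtgs (see e.g.\ \cite{alur2004,BGHM16}), so this is a citation issue rather than a gap. Alternatively, you can sidestep the limit entirely by running your monotonicity argument directly on $\Value_{\Gcal}(\ell_B,\cdot)$ --- which satisfies the Bellman equation --- instead of on the iterates; this collapses the induction to a single step and recovers exactly the paper's argument.
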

\begin{proof}
Pick some $x\in X^\uparrow_{\ell_A}$.
By picking a delay $\delta<1-\nu(x)$, Player $A$ offers Player $B$ more options than if they picked $\delta=1-\nu(x)$. From the perspective of $B$, if a larger delay is advantageous, then they can take it from $\ell_B$ at cost $0$. Hence it is optimal for either $A$ or $B$ to pick the largest delay possible, i.e. $\delta=1-\nu(x)$. However, since $w(\ell_A)=w(\ell_B)=0$, forcing $A$ to take a delay in $\ell_A$ which would have been taken in $\ell_B$ by Player $B$ given the chance does not change the value.
Thus, restricting $A$ to strategies which, when choosing $t$ from a valuation $\nu$, choose a delay $\delta$ such that $(\nu+\delta)(x)=1$ for all $x\in X^\uparrow_{\ell_A}$ does not change the value of the \wtg.
\end{proof}
\allreset*
\begin{proof}
Let $\Gcal=(L_\Minsf, L_\Maxsf,G,\mathcal{X},T,w,\wout)$. 
We assume that $\Maxsf$ does not have full control over any cycle (\ie in any cycle there is a $\Minsf$ location from which she can decide to leave the cycle). Indeed, if $\Maxsf$ could reach such a cycle, the Value of $\Gcal$ would be $+\infty$.
Furthermore, let us assume that
there is no $\Minsf$ self-loop (a transition $\ell\in L_\Minsf\overset{C,X}\rightarrow\ell$) with $X=\emptyset$ in $\Gcal$: it does not make strategic sense for $\Minsf$ to take such a loop, hence they can be deleted without change in value.

Let us transform $\Gcal$ through the following operations.
For any $t:\ell_1 \overset{C,X}{\rightarrow} \ell_2$ of $\Gcal$ such that $\ell_2$ is not a target location, and $X=\emptyset$ and $C$ has no guards of the form $x=1$ for any clock $x$:
\begin{itemize}
\item If $C$ has a $x=0$ requirement for some clock $x$, then add $x$ to $X$.
\item If $C$ has no $x=0$ requirement for all clock $x$, and $\ell_1$ and $\ell_2$ belong to the same player, then remove $t$
and, for any $t':\ell_2\to \ell_3$ with $\ell_3\neq \ell_1$,
\footnote{
Adding a transition in the case $\ell_3=\ell_1$ would create a self-loop: $\Minsf$ has no use for self-loops without reset, and we assume that $\Maxsf$ has full control over no cycle, so this situation never happens when $\ell_1$ belongs to $\Maxsf$.}
 create a transition $t':\ell_1\to\ell_3$ such that $C(t'')=C(t)\cup C(t')$

\item If $C$ has no $x=0$ requirement for all clock $x$, and $\ell_1$ and $\ell_2$ do not belong to the same player, then let us add a $x=1$ requirement to $C$ where $x\in X^\uparrow_{\ell_1}$. According to \cref{lem-no-mixed-gadgets}, it does not change the value.
\end{itemize}

After these transformations, every transition without reset in $\Gcal$ is either a transition to a target location, or has a $x=1$ guard for some $x\in\Xcal$.

Then let us build $\Gcal'=(L'_\Minsf, L'_\Maxsf,G',\mathcal{X},T',w',\wout')$ a kernel \wtg
where:
\begin{itemize}
\item For any player $\Player$, let $L'_\Player=L_\Player \cup \setof{\ell_\downarrow}{\ell \in L_\Player}$.
\item Let $G'= G \cup \setof{\ell_\downarrow}{\ell \in G}$.
\item Start from $T'=\emptyset$.
For any $t:\ell\overset{C,X}{\to}\ell'$ in $T$, 
	\begin{itemize}
	\item if $\ell'\in G$ then for all $X\subseteq \upclock{\ell}$ add $t':\ell\overset{C,X}{\to} \ell'$ 
and	$t':\ell_\downarrow\overset{C_\downarrow,X}{\to} \ell'_\downarrow$ 
	to $T'$, where $C_\downarrow$ is $C\cup \setof{x=0}{x\in \upclock{\ell}}$ deprived of guards $(x=1)$ for all $x\in \upclock{\ell}$. Note that here $\upclock{\ell}$ is defined according to $\reg(\ell)$ the region assignment in the relaxed trimmed region \wtg $\Gcal$.
	\item if $\ell'\not\in G$ and $X=\emptyset$ then $(x=1)\in C$ for some $x\in \upclock{\ell}$.
	Then add $t':\ell \overset{C,\upclock{\ell}}{\to} \ell'_\downarrow$ 
	and $t'':\ell_\downarrow\overset{C_\downarrow,\upclock{\ell}}{\to} \ell'_\downarrow$ to $T'$.
	\item if $\ell'\not\in G$ and $X= \upclock{\ell}$, then
	add $t':\ell \overset{C,X}{\to} \ell'$  
	and $t'':\ell_\downarrow \overset{C_\downarrow,X}{\to} \ell'$
	to $T'$, such that $X'=\upclock{\ell}$ and $C'=C \cup \setof{x=0}{x\in X_1} \setminus \setof {x=1}{x\in X_1}$.
	\item if $\ell'\not\in G$ and $X\neq\emptyset$, then
	add $t':\ell \overset{C,X}{\to} \ell'$  
	and $t'':\ell_\downarrow \overset{C',X'}{\to} \ell'_\downarrow$
	to $T'$ such that $X'=\upclock{\ell}$ and $C'=C \cup \setof{x=0}{x\in X_1} \setminus \setof {x=1}{x\in X_1}$.
	\end{itemize}
\item For all $\ell\in G$, let $\wout'(\ell,\nu)=\wout(\ell,\nu)$, $\wout'(\ell_\downarrow,\nu)=\wout(\ell,\nu')$
where $\nu'(x)=\nu(x)$ for all $x\not\in X$, and $\nu'(x)=1$ otherwise.

\end{itemize}

Intuitively, in $\Gcal$, when one or several clocks are made to reach $1$ by a guard, there will usually be some urgent transitions taken until all these clocks have been reset. In $\Gcal'$, those clocks are immediately reset, and control moves to a location $\ell_\downarrow$. All paths leaving $\ell_\downarrow$  have $(x=0)$ conditions (for all $x$ that have been reset in $\Gcal'$ but not in $\Gcal$) to guarantee urgency. 
A valuation $\nu'$ in a location $\ell_\downarrow$ in $\Gcal'$ is thus equivalent to a valuation $\nu$ in $\ell$ in $\Gcal$ iff $\nu(x)=\nu'(x)$ if $x\not\in \upclock{\ell}$, and $\nu(x)=1$ and $\nu'(x)=0$ for $x\in \upclock{\ell}$.

$\Gcal'$ is a relaxed region \wtg with region assignment $\reg'(\ell)=\reg(\ell)$ and $\reg'(\ell_\downarrow)=(X_0\cup X_p, X_1,\dots,X_{p-1})$ when $\reg(\ell)=(X_0, X_1,\dots,X_{p})$.
From there, trim $\Gcal'$ to obtain a relaxed trimmed region \wtg.

\end{proof}

\end{document}